\newtheorem{proposition}{Proposition}
\newtheorem{corollary}{Corollary}
\theoremstyle{definition}
\newtheorem{example}{Example}
\newcommand{\real}{\mathbb R} %real
\newcommand{\complex}{\mathbb C} %complex
\newcommand{\nat}{\mathbb N} %natural
\newcommand{\half}{\tfrac{1}{2}} %half
\newcommand{\mo}[1]{\left| #1 \right|} %modulus
\newcommand{\abs}{\mo} %modulus
\newcommand{\hi}{\mathcal{H}} %Hilbert space
\newcommand{\lh}{\mathcal{L(H)}} %bounded linear operators
\newcommand{\trh}{\mathcal{T(H)}} %trace class operators
\newcommand{\sh}{\mathcal{S(H)}} %states
\newcommand{\eh}{\mathcal{E(H)}} %effects
\newcommand{\ip}[2]{\left\langle\,#1\,|\,#2\,\right\rangle} %inner product
\newcommand{\ket}[1]{|#1\rangle} %ket
\newcommand{\kb}[2]{|#1\rangle\langle#2|} %ketbra
\newcommand{\no}[1]{\left\|#1\right\|} %norm
\newcommand{\tr}[1]{\mathrm{tr}\left[#1\right]} %trace
\newcommand{\ran}{\mathrm{ran}} %range
\newcommand{\id}{\mathbbm{1}} %identity operator
\renewcommand{\span}{\mathrm{span}} %commutation domain
\newcommand{\mdc}{\mathcal{M}_{d}(\complex)} %complex matrices
\newcommand{\mkc}{\mathcal{M}_{d_k}(\complex)} %complex matrices
\newcommand{\Ao}{\mathsf{A}}%generic observable
\newcommand{\Bo}{\mathsf{B}}%generic observable
\newcommand{\Go}{\mathsf{G}}%generic joint observable
\newcommand{\Po}{\mathsf{P}}%generic observable
\newcommand{\Ii}{\mathcal{I}} % instrument I
\newcommand{\Ic}{\Ii_{\Omega}} % channel defined by I
\newcommand{\IiL}{\Ii^L} % Luders instrument
\newcommand{\IcL}{\Ic^L} % Luders channel
\newcommand{\fix}[1]{F(#1)} % fixed points
\newcommand{\1}{\mathbbm{1}}
\newcommand{\be}{\begin{equation}}
\newcommand{\ee}{\end{equation}}
\newcommand{\bea}{\begin{eqnarray}}
\newcommand{\eea}{\end{eqnarray}}
\begin{document}

\title[Non-disturbing quantum measurements]{Non-disturbing quantum measurements}

\author[Heinosaari]{Teiko Heinosaari}
\address{Teiko Heinosaari, Niels Bohr Institute, Blegdamsvej 17, 2100 Copenhagen, Denmark}
\email{heinosaari@nbi.dk}

\author[Wolf]{Michael M. Wolf}
\address{Michael M. Wolf, Niels Bohr Institute, Blegdamsvej 17, 2100 Copenhagen, Denmark}
\email{wolf@nbi.dk}

\begin{abstract} We consider pairs of discrete quantum observables (POVMs) and analyze the relation between the notions of non-disturbance, joint measurability and commutativity. We specify conditions under which these properties coincide or differ---depending for instance on the interplay between the number of outcomes and the Hilbert space dimension or on algebraic properties of the effect operators. We also show that (non-)disturbance is in general not a symmetric relation and that it can be decided and quantified by means of a semidefinite program. 
\end{abstract}

\maketitle

%%%%%%%%%%%%%%%%%%%%
\section{Introduction}\label{sec:intro}
%%%%%%%%%%%%%%%%%%%%

One of the main features of quantum mechanics is that measurements of different observables usually disturb each other. 
This property often comes along with non-commutativity or the impossibility of measuring observables simultaneously. 
Strictly speaking, however, \emph{non-disturbance, joint measurability} and \emph{commutativity} are different concepts and it is the aim of the present paper to clarify their precise relation. 

In general all these notions turn out to be different, but we will see that some coincide with others under certain conditions involving  for instance the interplay between the number of measurement outcomes and the Hilbert space dimension or algebraic properties of the effect operators. 

In this work we concentrate on discrete observables.
Our investigation is organized as follows.
A brief overview of the relevant concepts is given in Section \ref{sec:preli}.
The disturbance of one measurement w.r.t. another is then studied in Section \ref{sec:pairs}, where it is shown that (i) non-disturbance is not a symmetric relation, (ii) it is equivalent to commutativity if the second measurement has sufficiently many independent outcomes but (iii) inequivalent to joint measurability and commutativity in general. 
In Section \ref{sec:fkm} we discuss measurements which do not disturb themselves, i.e., measurements of the first kind, and their relation to  repeatability and commutativity. Finally in Section \ref{sec:sdp} we argue that (non-)disturbance can be decided and quantified efficiently by means of a semidefinite program.

%%%%%%%%%%%%%%%%%%%%%%%%%%%%%%%%%%%%%%%%%%%%%%
\section{Preliminaries}\label{sec:preli}
%%%%%%%%%%%%%%%%%%%%%%%%%%%%%%%%%%%%%%%%%%%%%%

This section will fix some notations and introduce the basic concepts.
Let $\hi$ be a complex Hilbert space, either finite or countably infinite dimensional.
We denote by $\lh$ the set of bounded linear operators and by $\trh\subseteq\lh$ the set of trace class operators on $\hi$.
A positive operator $\varrho\in\trh$ having trace one is a \emph{density operator}, also referred to as \emph{state}, and we denote by $\sh$ the set of all states.

%%%%%%%%%%%%%%%%%%%%
\subsection{Observables}
%%%%%%%%%%%%%%%%%%%%

Observables are generally described by \emph{positive operator valued measures} (POVMs).
In this work we only consider discrete observables.
Therefore, an observable is characterized by a finite or countably infinite collection of positive operators $\Ao\equiv\{\Ao_x\}$ satisfying $\sum_x \Ao_x = \id$. Here the sum runs over all $x\in\Omega_\Ao$, where the set $\Omega_\Ao$ is the collection of all possible measurement outcomes.
Whenever convenient we take $\Omega_\Ao \subseteq \nat$ as the labeling of the outcomes is irrelevant in our investigation.
If a system is prepared in a state $\varrho$, then a measurement of an observable $\Ao$ will lead to an outcome $x$ with probability $\tr{\varrho\Ao_x}$.

A selfadjoint operator $E\in\lh$ satisfying $0\leq E \leq \id$ is called an \emph{effect} and we denote the set of all effects by $\eh$.
Note that $\Ao_x\in \eh$, i.e., the elements of a POVM are effects.

%%%%%%%%%%%%%%%%%%%%%%%%%%%%%%%
\subsection{Joint measurability and commutativity}
%%%%%%%%%%%%%%%%%%%%%%%%%%%%%%%

Given two observables $\Ao$ and $\Bo$, we say that they are \emph{jointly measurable} if there exists a third observable $\Go$ with $\Omega_\Go=\Omega_\Ao \times\Omega_\Bo$ and satisfying $\sum_x\Go_{x,y}=\Bo_y$ for all $y$ and $\sum_y \Go_{x,y}=\Ao_x$ for all $x$.
In other words, $\Ao$ and $\Bo$ correspond to the `marginals' of $\Go$.

A particular case of jointly measurable pairs of observables  $\Ao$ and $\Bo$ are those which commute, i.e.,
\begin{equation*}
[\Ao_x,\Bo_y]=0
\end{equation*}
for all $x\in\Omega_{\Ao}$, $y\in\Omega_{\Bo}$.
In this case we can set $\Go_{x,y}=\Ao_x\Bo_y$ which defines an observable $\Go$ on the product set $\Omega_{\Ao}\times\Omega_{\Bo}$ since the commutativity of $\Ao$ and $\Bo$ guarantees that the operators $\Go_{x,y}$ are positive.
Note that when talking about commuting pairs of observables we do not necessarily require that the effects within each observable are commuting, i.e., $[\Ao_x,\Ao_{x'}]$ may be non-zero for $x\neq x'$.

Two observables $\Ao$ and $\Bo$ can be jointly measurable even if they do not commute.
The relation of being jointly measurable is also qualitatively different from commutativity.
For instance, even if all partitionings of $\Ao$ and $\Bo$ into two outcome observables are jointly measurable, it may happen that $\Ao$ and $\Bo$ are not \cite{HeReSt08}.
We also recall that not being jointly measurable is closely related and sometimes provably equivalent to the possibility of detecting 'quantum non-locality', i.e., the ability of violating a Bell inequality \cite{WoPeFe09}.

%%%%%%%%%%%%%%%%%%%%
\subsection{Instruments}
%%%%%%%%%%%%%%%%%

An observable describes the statistics of the outcomes of a measurement but leaves open how the measurement alters the quantum state.
In order to discuss this we need the concept of an \emph{instrument} \cite{QTOS76}.
An instrument which implements an observable $\Ao$ is a collection of completely positive linear maps $\Ii\equiv\{\Ii_x\}$ on $\trh$ which satisfy $\Ii_x^\ast(\1)=\Ao_x$ for every $x$.
Here the \emph{adjoint map} $\Ii_x^\ast$ is defined via the usual trace duality $\tr{X \Ii_x(Y)}=\tr{\Ii_x^\ast(X)Y}$ for all $X\in\lh,Y\in\trh$.
In other words, $\Ii_x^\ast$ and $\Ii_x$ correspond to the Heisenberg and Schr\"odinger pictures, respectively.

If an observable $\Ao$ is implemented via an instrument $\Ii$, then $\Ii_x(\varrho)$ is the unnormalized state after having obtained the measurement outcome $x$ upon an input state $\varrho$.
Note that $\tr{\Ii_x(\varrho)}=\tr{\varrho\Ao_x}$ is the probability for this to happen.
 We always assume that the output system has the same dimension as the input system.
 
If we ignore the measurement outcome, an instrument $\Ii$ transforms an input state $\varrho$ to an (unconditional) output state
\begin{equation*}
\Ic(\varrho):=\sum_x\Ii_x(\varrho) \, .
\end{equation*}
The map $\Ic$ is a \emph{quantum channel}, i.e., a completely positive, trace-preserving linear map on $\trh$.
The dual map $\Ic^\ast$ is completely positive, identity preserving linear map on $\lh$.
It is also $\sigma$-weakly continuous.

Evidently, many different instruments correspond to the same observable.
A particular implementation of an observable $\Ao$ is given by its \emph{L\"uders instrument} $\IiL$, defined as
\begin{equation*}
\IiL_x(\varrho)=\Ao_x^{1/2} \varrho \Ao_x^{1/2} \, .
\end{equation*}
It is also easy to give examples of other instruments.
For instance, fix a state $\xi_x$ for each outcome $x\in\Omega_\Ao$.
Then the formula
\begin{equation}
\Ii_x(\varrho) = \tr{\varrho \Ao(x)}\xi_x
\end{equation}
defines an instrument implementing $\Ao$.

%%%%%%%%%%%%%%%%%%%%%%%%%
\subsection{Non-disturbing measurements}
%%%%%%%%%%%%%%%%%%%%%%%%

Given two observables $\Ao$ and $\Bo$ we say that $\Ao$ \emph{can be measured without disturbing} $\Bo$ if there exists an instrument $\Ii$ which implements $\Ao$ and for which
\begin{equation}\label{eq:A-not-disturb-B-Sch}
\tr{\Ic(\varrho)\Bo_y}=\tr{\varrho\Bo_y} \quad \forall \varrho\in\sh,y\in\Omega_\Bo \, .
\end{equation}
This means that the measurement statistics of $\Bo$ are the same for all pairs of an input state $\varrho$ and an output state $\Ic(\varrho)$.

We can write the  non-disturbance condition \eqref{eq:A-not-disturb-B-Sch} in an equivalent form
\begin{equation}\label{eq:A-not-disturb-B}
\Ic^*(\Bo_y)=\Bo_y\quad\forall y\in\Omega_\Bo \, .
\end{equation}
Hence, this is to say all the effects $\Bo_y$ are \emph{fixed points} of $\Ic^\ast$.
We denote by $\fix{\Ic^\ast}\subseteq\lh$ the set of fixed points of $\Ic^\ast$. This is clearly a linear subspace of $\lh$.
Moreover, since $\Ic^\ast$ is $\sigma$-weakly continuous, $\fix{\Ic^\ast}$ is $\sigma$-weakly closed.

If $\Ao$ and $\Bo$ commute, then a non-disturbing measurement can be achieved by the L\"uders instrument implementing $\Ao$ since
\begin{eqnarray*}
\tr{\IcL(\varrho)\Bo_y} &=& \sum_{x} \tr{\Ao_x^{1/2} \varrho \Ao_x^{1/2} \Bo_y} =  \sum_{x} \tr{\Ao_x\varrho\Bo_y} \\
&=& \tr{\varrho \Bo_y} \, .
\end{eqnarray*}
We recall that if the Hilbert space $\hi$ is finite dimensional, then the L\"uders instrument implementing $\Ao$ does not disturb $\Bo$ if and \emph{only if} they commute \cite{BuSi98}.
In an infinite dimensional Hilbert space this is not generally true; there exist non-commuting observables $\Ao$ and $\Bo$ such that the L\"uders measurement of $\Ao$ does not disturb $\Bo$  \cite{ArGhGu02}, \cite{WeJu09}.

To give a class of examples of pairs in which a non-disturbing measurement is not possible, suppose that $\Bo$ is an informationally complete observable.
This means that the probabilities $\tr{\varrho\Bo_y}$ uniquely determine every state $\varrho$.
It is then clear from the non-disturbance condition \eqref{eq:A-not-disturb-B-Sch} that $\Ic(\varrho)=\varrho$ for every state $\varrho$.
However, every non-trivial observable $\Ao$ necessarily perturbs at least some state and therefore also disturbs any informationally complete observable $\Bo$.

If $\Ao$ can be implemented via an instrument $\Ii$ which does not disturb $\Bo$, then $\Ao$ and $\Bo$ are jointly measurable.
This is quite obvious since their sequential measurement gives the measurement statistics of both $\Ao$ and $\Bo$ without any perturbation.
Formally,  we can set $\Go_{x,y}=\Ii_x^\ast(\Bo_y)$ which defines a joint observable $\Go$ for $\Ao$ and $\Bo$.

In Subsection \ref{sec:two-outcome} we will see an explicit example of two observables $\Ao$ and $\Bo$ which are jointly measurable despite the fact that one necessarily disturbs the other.
In that case, there is no instrument $\Ii$ implementing $\Ao$ and satisfying $\Ic^\ast(\Bo_y)=\Bo_y$ for all $y$.
Hence, the joint measurement cannot be implemented sequentially by first measuring $\Ao$ and then $\Bo$.

Let us remark that if two observables are jointly measurable, there is always a sequential implementation which encodes the outcomes of the $\Bo$-measurement into a quantum system and then recovers them again by a final measurement. 
The latter is then, however, generally different from $\Bo$ and one may have to increase the dimension of the Hilbert space for the encoding.
An instrument for this kind of sequential implementation can be chosen to be
\begin{equation*}
\Ii_x(\varrho) = \sum_y \tr{\varrho \Go_{x,y}} \kb{y}{y} \, ,
\end{equation*}
where $\Go$ is a joint observable of $\Ao$ and $\Bo$ and $\{\ket{y}\}$ is an orthonormal basis.
The instrument $\Ii$ implements $\Ao$, while a subsequent measurement in the basis $\ket{y}$ yields $\Bo$.

%%%%%%%%%%%%%%%%%%%%%%%%%%%%%%%%%%%%%%%%%%%%%%
\subsection{Sharp observables}\label{sec:sharp}
%%%%%%%%%%%%%%%%%%%%%%%%%%%%%%%%%%%%%%%%%%%%%%

An observable $\Ao$ is called \emph{sharp} if each effect $\Ao_x$ is a projection.
It is well known that if we have two observables $\Ao,\Bo$ and (at least) one of them is sharp, then the three relations - commutativity, joint measurability and non-disturbance - are equivalent.
In the following we note two slightly stronger results.

\begin{proposition}\label{prop:joint->commute}
Let $\Ao$ and $\Bo$ be two jointly measurable observables.
Suppose that $\Bo_y$ is a projection.
Then $[\Ao_x,\Bo_y]=0$ for all $x\in\Omega_\Ao$.
\end{proposition}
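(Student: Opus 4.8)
The plan is to use the joint observable to decompose $\Ao_x$ into two pieces living in the orthogonal subspaces determined by the projection $\Bo_y$. Fix the outcome $y$ for which $P:=\Bo_y$ is a projection, and let $\Go$ be a joint observable of $\Ao$ and $\Bo$, so that $\sum_{y'}\Go_{x,y'}=\Ao_x$ and $\sum_{x}\Go_{x,y}=\Bo_y=P$. Since every $\Go_{x,y}$ is positive, for each fixed $x$ the remaining terms $\sum_{x'\neq x}\Go_{x',y}$ form a positive operator, whence $0\le\Go_{x,y}\le P$ (in the countably infinite case the sum converges $\sigma$-weakly and the same bound already holds for every finite partial sum, so the inequality persists in the limit).

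The crux is a sandwiching lemma: if $0\le A\le P$ for a projection $P$, then $A=PAP$, so $A$ is supported on $\ran P$ and in particular $[A,P]=0$. To see this, take any $\psi\in\ker P$; then $0\le\ip{\psi}{A\psi}\le\ip{\psi}{P\psi}=0$, and since $A\ge 0$ we get $\no{A^{1/2}\psi}^2=\ip{\psi}{A\psi}=0$, so $A^{1/2}\psi=0$ and hence $A\psi=0$. Thus $A(\id-P)=0$, i.e.\ $A=AP$; taking adjoints gives $A=PA$, and therefore $A=PAP$. Applying this to $A=\Go_{x,y}$ shows $\Go_{x,y}=P\Go_{x,y}P$.

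Next I would treat the complementary part. Set $R_x:=\Ao_x-\Go_{x,y}=\sum_{y'\neq y}\Go_{x,y'}\ge 0$. Summing over $x$ gives $\sum_x R_x=\id-\sum_x\Go_{x,y}=\id-P$, which is again a projection, so $0\le R_x\le\id-P$. The same lemma (with $\id-P$ in place of $P$) yields $R_x=(\id-P)R_x(\id-P)$, so $R_x$ is supported on $\ker P$. Consequently $\Ao_x=\Go_{x,y}+R_x$ is the sum of an operator supported on $\ran P$ and one supported on the orthogonal subspace $\ker P$; hence $\Ao_x$ is block diagonal with respect to the decomposition $\hi=\ran P\oplus\ker P$, which is exactly the statement $[\Ao_x,\Bo_y]=0$.

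The argument is short and the only real content is the sandwiching lemma; the points requiring care are the operator-ordering step $0\le\Go_{x,y}\le P$ and, in the countably infinite case, the $\sigma$-weak convergence of the defining sums, neither of which causes genuine difficulty. I would emphasize that only joint measurability, and not commutativity, is assumed, and that the conclusion is the stronger pointwise commutativity $[\Ao_x,\Bo_y]=0$ for the single sharp outcome $y$, holding uniformly in $x$.
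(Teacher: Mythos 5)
Your proof is correct and follows essentially the same route as the paper: both bound $\Go_{x,y}$ by $P$ and the complementary pieces by $\id-P$, invoke the fact that an effect dominated by a projection is supported on its range, and sum to get $[\Ao_x,\Bo_y]=0$. The only cosmetic differences are that you group the terms $\Go_{x,y'}$, $y'\neq y$, into a single operator $R_x$ (the paper treats them individually) and that you spell out the sandwiching lemma the paper merely asserts.
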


\begin{proof}
Let $\Go$ be a joint observable for $\Ao$ and $\Bo$.
Since $\sum_{x} \Go_{xy} = \Bo_y$, we conclude that $\Go_{xy}\leq\Bo_y$ for every $x$.
But as $\Bo_y$ is a projection, this implies that $\Go_{xy}\Bo_y = \Bo_y\Go_{xy}=\Go_{xy}$ for every $x$.
Also $\id - \Bo_y$ is a projection and $\sum_{y'\neq y}\sum_{x} \Go_{xy'} = \id - \Bo_y$.
We thus get $\Go_{xy'}(\id-\Bo_y) = (\id-\Bo_y)\Go_{xy'}=\Go_{xy'}$ for every $x$ and every $y'\neq y$, implying that $\Go_{xy'}\Bo_y = \Bo_y\Go_{xy'}=0$.
We conclude that $\Bo_y$ commutes with all effects $\Go_{xy'}$, therefore also with $\Ao_x = \sum_{y'} \Go_{xy'}$.
\end{proof}

\begin{proposition}\label{prop:nondist->commute}
Let $\Ao$ and $\Bo$ be two observables.
Suppose that for some $y$, the effect $\Bo_y$ is proportional to a projection.
If an instrument $\Ii$ implements $\Ao$ and $\Ic^\ast(\Bo_y) = \Bo_y$,
then $[\Ao_x,\Bo_y]=0$ for all $x\in\Omega_\Ao$.
\end{proposition}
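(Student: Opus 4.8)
The plan is to descend from the fixed-point condition on $\Ic^\ast$ to a commutation relation for the individual Kraus operators of the channel $\Ic$, and then reassemble these into $\Ao_x$. The starting observation is that $\Ic^\ast$ is unital: since $\Ii_x^\ast(\id)=\Ao_x$ and $\sum_x\Ao_x=\id$, we have $\Ic^\ast(\id)=\sum_x\Ii_x^\ast(\id)=\id$. Being the dual of a channel, $\Ic^\ast$ is moreover completely positive and $\sigma$-weakly continuous, so it obeys the Kadison--Schwarz inequality $\Ic^\ast(X)^\ast\Ic^\ast(X)\le\Ic^\ast(X^\ast X)$ for all $X\in\lh$.

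First I would exploit that $\Bo_y$ is proportional to a projection, say $\Bo_y=\lambda P$ with $P^2=P$, so that $\Bo_y^2=\lambda\Bo_y$. Applying the Schwarz inequality to the selfadjoint element $\Bo_y$ gives $\Ic^\ast(\Bo_y)^2\le\Ic^\ast(\Bo_y^2)$. The left-hand side equals $\Bo_y^2$ by the fixed-point hypothesis $\Ic^\ast(\Bo_y)=\Bo_y$, while the right-hand side equals $\lambda\,\Ic^\ast(\Bo_y)=\lambda\Bo_y=\Bo_y^2$. Hence both sides coincide, and $\Bo_y$ saturates the Schwarz inequality, i.e.\ it lies in the multiplicative domain of $\Ic^\ast$.

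The crucial step is to convert this equality into information about the instrument. I would use the minimal Stinespring dilation $\Ic^\ast(X)=V^\ast\pi(X)V$, with $V$ an isometry (isometric precisely because $\Ic^\ast$ is unital, as $V^\ast V=\Ic^\ast(\id)=\id$) and $\pi$ a normal representation; equivalently, a Kraus form $\Ic^\ast(X)=\sum_{x,j}K_{x,j}^\ast X K_{x,j}$ in which $\Ii_x(\varrho)=\sum_j K_{x,j}\varrho K_{x,j}^\ast$, so that $\Ao_x=\Ii_x^\ast(\id)=\sum_j K_{x,j}^\ast K_{x,j}$. Saturation of the Schwarz inequality for the selfadjoint fixed point $\Bo_y$ forces the intertwining relation $\pi(\Bo_y)V=V\Bo_y$, which at the level of Kraus operators reads $\Bo_y K_{x,j}=K_{x,j}\Bo_y$ for every $x$ and $j$. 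Taking adjoints (legitimate since $\Bo_y$ is selfadjoint) shows that $\Bo_y$ also commutes with each $K_{x,j}^\ast$, hence with every product $K_{x,j}^\ast K_{x,j}$. Summing over $j$ yields $[\Ao_x,\Bo_y]=0$ for each $x$, as desired.

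The main obstacle is the equality-case analysis of the Schwarz inequality and its passage to the individual Kraus operators, which is exactly where complete positivity (rather than mere positivity) and the instrument structure are genuinely used. In infinite dimensions one must ensure that the dilation can be taken normal and that grouping the Kraus operators of $\Ic$ according to the instrument components $\Ii_x$ is legitimate; both follow from the $\sigma$-weak continuity of $\Ic^\ast$ recorded in the preliminaries. The case $\lambda=0$ is trivial, and for $\lambda>0$ one may equally run the argument with the genuine projection $P=\lambda^{-1}\Bo_y$, since commuting with $\Bo_y$ and with $P$ are the same condition.
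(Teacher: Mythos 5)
Your proof is correct, but it takes a genuinely different route from the paper's proof of this proposition. The paper (following Lemma 3.3 of Bratteli--Jorgensen--Kishimoto--Werner) works with the support projection $P$ of $\Bo_y$: unitality of $\Ic^\ast$ makes both $P$ and $P^\perp=\id-P$ fixed points, and compressing the fixed-point equation $\Ic^\ast(P)=P$ by $P^\perp$ from both sides writes $0$ as the sum of positive terms $\sum_{\alpha,x}\left(PK_{\alpha,x}P^\perp\right)^\ast\left(PK_{\alpha,x}P^\perp\right)$, forcing the off-diagonal blocks $PK_{\alpha,x}P^\perp$ (and, by the symmetric argument with $P^\perp$, also $P^\perp K_{\alpha,x}P$) to vanish, hence $[K_{\alpha,x},P]=0$. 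You instead exploit $\Bo_y^2=\lambda\Bo_y$ to show that the fixed point $\Bo_y$ saturates the Kadison--Schwarz inequality, and then use the equality case (multiplicative domain) through a dilation to conclude $[K_{x,j},\Bo_y]=0$; both routes then reassemble $\Ao_x=\sum_j K_{x,j}^\ast K_{x,j}$. Two remarks. First, minimality of the Stinespring dilation is never needed and is best dropped: the identity $\Ic^\ast(a^\ast a)-\Ic^\ast(a)^\ast\Ic^\ast(a)=\left[(\id-VV^\ast)\pi(a)V\right]^\ast\left[(\id-VV^\ast)\pi(a)V\right]$ holds for \emph{any} dilation with $V$ an isometry, in particular for the one assembled from the Kraus operators of the instrument, which is exactly where you read off $\Bo_yK_{x,j}=K_{x,j}\Bo_y$; your argument silently uses this, and is valid because of it. Second, your mechanism is in substance the one the paper itself uses for Proposition \ref{prop:square}: there the Schwarz defect appears concretely as $\sum_j[K_j,\Bo_y]^\ast[K_j,\Bo_y]=\Ic^\ast(\Bo_y^2)-\Bo_y^2$, and your observation $\Bo_y^2=\lambda\Bo_y$ says precisely that the square of $\Bo_y$ is again a fixed point, so the hypothesis of that proposition holds ``locally at $y$'' without assuming anything about the other effects of $\Bo$. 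Your proof thus effectively subsumes Proposition \ref{prop:nondist->commute} under the method of Proposition \ref{prop:square}, which is more conceptual and makes the multiplicative-domain structure explicit; the paper's own proof buys elementarity, using nothing beyond positivity of a sum of squares and the projection structure of $P$ and $P^\perp$.
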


\begin{proof}
The proof follows Lemma 3.3 in \cite{BrJoKiWe00}.
Denote by $P$ the projection onto the support of $\Bo_y$ and $P^\perp=\1-P$.
The assumption means that $\Bo_y = cP$ for some $c > 0$, and $\Ic^\ast(\Bo_y) = \Bo_y$ thus implies that $\Ic^\ast(P) =P$.
Moreover, since $\Ic^\ast(\id)=\id$ we also have $\Ic^\ast(P^\perp)=P^\perp$.

Since $\Ii_x$ is completely positive we can make use
of its Kraus decomposition $\Ii_x(\cdot)=\sum_\alpha K_{\alpha,x}\cdot
K_{\alpha,x}^\ast$.
If we insert this into the equation $\Ic^\ast(P)=P$  and multiply with $P^\perp$ from the left and the
right, we get
\be
\sum_{\alpha,x}P^\perp K_{\alpha,x}^\ast P^2 K_{\alpha,x}P^\perp=0
\label{eq:profpropprop}.\ee
Since all the summands are positive this implies $P K_{\alpha,x}
P^\perp=0$ and therefore $P K_{\alpha,x} P = P  K_{\alpha,x}$.
Interchanging $P$ and $P^\perp$ in the argument gives $P K_{\alpha,x}
P =   K_{\alpha,x} P$ and thus $[P, K_{\alpha,x}]=0$.
Taking the adjoint of this equation shows that $P$ also commutes with $K_{\alpha,x}^\ast$,  so
that $[\Ao_x,P]=0$ and finally $[\Ao_x,\Bo_y]=0$.
\end{proof}

%%%%%%%%%%%%%%%%%%%%%%%%%%%%%%%%%%%%%%%%%%%%
\section{Pairs of quantum observables}\label{sec:pairs}
%%%%%%%%%%%%%%%%%%%%%%%%%%%%%%%%%%%%%%%%%%%%

In this section we investigate the non-disturbance relation for pairs of observables.
We start by demonstrating that the non-disturbance criterion differs from the joint measurability and commutativity criteria even for pairs of two-outcome observables (Subsec. \ref{sec:two-outcome}).
However, with some additional requirements non-disturbance reduces to commutativity.
This happens, for instance, in the case of qubit observables (Subsec. \ref{sec:qubit}).
The overall picture is schematically summarized in Fig. \ref{fig:1}.

\begin{figure}
\begin{center}
\includegraphics[width=10cm]{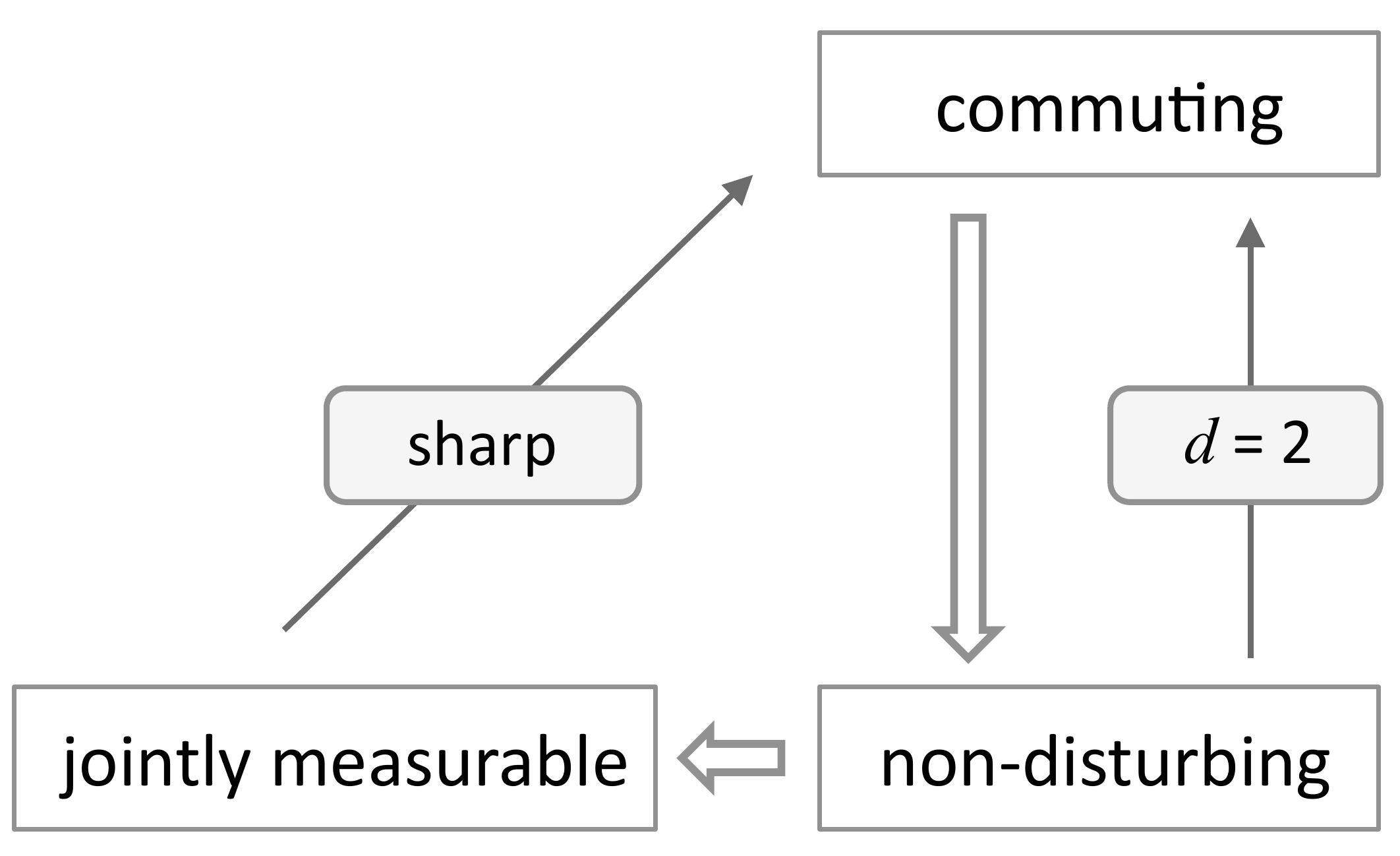}
\end{center}
\caption{Implications between the different concepts. The gray boxes indicate the additional conditions under which the implications holds. \label{fig:1}}
\end{figure}

We show that even if an observable $\Ao$ can be measured without disturbing another observable $\Bo$, the converse need not hold (Subsec. \ref{sec:not-symmetric}). This means that non-disturbance is not a symmetric relation, unlike commutativity and joint measurability. Finally, we discuss the disturbance caused by a rank-1 observable (Subsec. \ref{sec:rank1}).

%%%%%%%%%%%%%%%%%%%%%%%%%%%%%%%%%%%%%%%%%%%%%%%%%
\subsection{Two-outcome measurements}\label{sec:two-outcome}
%%%%%%%%%%%%%%%%%%%%%%%%%%%%%%%%%%%%%%%%%%%%%%%%%

In the simplest case an observable $\Ao$ has only two outcomes.
It is then determined by a single effect $\Ao_1$, since the normalization requires that $\Ao_2=\id-\Ao_1$.
Clearly, a two-outcome observable is commutative.

Let $\Ao$ and $\Bo$ be sharp two-outcome observables.
We assume that $\Ao$ and $\Bo$ do not commute, hence they are not jointly measurable since these concepts coincide for sharp observables.

For all $\mu,\nu\in(\half,1]$, we define \emph{coarse-grainings} of $\Ao$ and $\Bo$ by
\begin{equation*}
\Ao^{\mu}_1 =  \mu\Ao_1 +(1-\mu)\Ao_2 \, , \quad  \Ao^{\mu}_2 =  (1-\mu) \Ao_1 +\mu \Ao_2
\end{equation*}
and
\begin{equation*}
\Bo^{\nu}_1 =  \nu\Bo_1 +(1-\nu)\Bo_2 \, , \quad  \Bo^{\nu}_2 =  (1-\nu) \Bo_1 +\nu \Bo_2\, .
\end{equation*}
In this way we get observables $\Ao^{\mu}$ and $\Bo^{\nu}$, which we regard as approximate versions of $\Ao$ and $\Bo$, respectively.
The numbers $\mu$ and $\nu$ quantify the levels of approximation.

If $\mu$ and $\nu$ are small enough, then $\Ao^{\mu}$ and $\Bo^{\nu}$ are jointly measurable.
For instance, suppose that $\mu=\nu$ and define four operators by
\begin{eqnarray*}
&& \Go_{1,1} = (1-\mu)\,\id \\
&& \Go_{1,2} = (2\mu-1)\Ao_1 \\
&& \Go_{2,1} = (2\mu-1)\Bo_1 \\
&& \Go_{2,2} = \mu\, \id - (2\mu-1) (\Ao_1 + \Bo_1) \, .
\end{eqnarray*}
The first three operators are clearly positive, and taking into account that $0\leq \Ao_1 + \Bo_1 \leq 2 \id$ we see that also the fourth operator is positive if $\half < \mu \leq \frac{2}{3}$.
Under this condition $\Go$ is a joint observable for $\Ao^{\mu}$ and $\Bo^{\mu}$.
In conclusion, if $\half < \mu \leq \frac{2}{3}$ then $\Ao^{\mu}$ and $\Bo^{\mu}$ are jointly measurable.

We claim that $\Ao^{\mu}$ and $\Bo^{\nu}$ cannot be measured without disturbing each other, no matter how the numbers $\mu$ and $\nu$ are chosen from the interval $(\half,1]$.
To prove this, let us first notice that
\begin{equation*}
\Ao^{\mu}_1 = (2\mu-1) \Ao_1 + (1-\mu) \id \, , \quad \Bo^{\nu}_1 = (2\nu-1) \Bo_1 + (1-\nu) \id \, .
\end{equation*}
Hence,
\begin{equation*}
[\Ao^{\mu}_1,\Bo^{\nu}_1] = (2\mu-1)(2\nu-1) [\Ao_1,\Bo_1] \, ,
\end{equation*}
showing that $\Ao^{\mu}$ and $\Bo^{\nu}$ do not commute for any $\mu,\nu\in (\half,1]$.
In particular, $\Ao^{\mu}$ and the sharp observable $\Bo$ are not jointly measurable according to Prop.\ref{prop:nondist->commute}.

Let us make a counter assumption that there exists an instrument $\Ii$ implementing $\Ao^{\mu}$ and not disturbing $\Bo^{\nu}$, i.e.,
\begin{equation}
\Ic^\ast(\Bo^{\nu}_y) = \Bo^{\nu}_y \qquad \ y=1,2 \, .
\end{equation}
But since $\Ic^\ast(\id) = \id$
and $\Bo_y$ is a linear combination of $\id$ and $\Bo^{\nu}_y$, it follows that
\begin{equation}
\Ic^\ast(\Bo_y) = \Bo_y \qquad \ y=1,2 \, .
\end{equation}
However, this non-disturbance condition cannot hold as $\Ao^{\mu}$ and $\Bo$ are not jointly measurable. So joint measurability of the above observables does never imply non-disturbance.

Let us proceed by giving an example of non-commuting observables $\Ao$ and $\Bo$ such that there exists an $\Ao$-measurement not disturbing $\Bo$.
This example originates from Remark 2 in \cite{Arveson72}.

Let $\hi=\complex^3$.
We define observables $\Ao$ and $\Bo$ as
\begin{equation*}
\Ao_1= \frac{1}{4}
\begin{pmatrix}
 2 & 0 &  -\sqrt{2}\\
0 & 4 & 0\\
-\sqrt{2} & 0 &  3
\end{pmatrix}
\, , \quad
\Ao_2= \frac{1}{4}
\begin{pmatrix}
 2 & 0 &  \sqrt{2}\\
0 & 0 & 0\\
 \sqrt{2} & 0 & 1
\end{pmatrix}
\end{equation*}
and
\begin{equation}\label{eq:B-diagonal}
\Bo_1=\frac{1}{2}
\begin{pmatrix}
 2 & 0 &  0 \\
0 & 0 & 0\\
0 & 0 &  1
\end{pmatrix}
\, , \quad
\Bo_2=\frac{1}{2}
\begin{pmatrix}
0 & 0 & 0\\
0 & 2 & 0\\
0 & 0 & 1
\end{pmatrix}
\, .
\end{equation}
Then $\Ao$ and $\Bo$ do not commute.

However, there is an instrument implementing $\Ao$ which does not disturb $\Bo$.
We set
\begin{eqnarray*}
&K_1& =\frac{1}{2}
\begin{pmatrix}
 \sqrt{2} & 0 & 0 \\
0 & 0 & 0\\
-1 & 0 & 0
\end{pmatrix}
\, , \quad
K_2 = \frac{1}{10}
\begin{pmatrix}
0 & 0 & 0\\
0 & -\sqrt{10} & 0\\
0 & 2\sqrt{10} & 0
\end{pmatrix}
\, , \\ 
&K_3& = \frac{1}{2}
\begin{pmatrix}
0 & 0 & 0\\
0 & \sqrt{2} & 0\\
0 & 0 & 0
\end{pmatrix}
\, , \quad
K_4 =\frac{1}{20}
\begin{pmatrix}
0 & 0 & 0\\
0 & 4\sqrt{10} & 0\\
0 & 2\sqrt{10} & 0
\end{pmatrix}
\, , \\
& K_5 &=\frac{1}{2}
\begin{pmatrix}
 \sqrt{2} & 0 & 0 \\
0 & 0 & 0\\
 1 & 0 & 0
\end{pmatrix} 
\end{eqnarray*}
and define $\Ii_1^\ast (\cdot) = \sum_{j=1}^4 K_j \cdot K_j^\ast$ and $\Ii_2^\ast (\cdot) = K_5 \cdot K_5^\ast$.
It is straighforward to check that $\Ii$ satisfies the non-disturbance condition \eqref{eq:A-not-disturb-B}.

%%%%%%%%%%%%%%%%%%%%%%%%%%%%%%%%%%%%%%%%%%
\subsection{When does non-disturbance reduce to commutativity?}\label{sec:qubit}
%%%%%%%%%%%%%%%%%%%%%%%%%%%%%%%%%%%%%%%%%%

It is a fundamental fact of quantum theory that every measurement perturbs the system.
Therefore, we expect that in a sequence of non-disturbing measurements, the second measurement cannot be too informative since otherwise we would detect the perturbation caused by the first measurement.
In the following we give some precise conditions for this intuitive idea.

For each observable $\Bo$, we denote by $\span\Bo$ the linear subspace in $\lh$ generated by the set $\{\Bo_y : y\in\Omega_\Bo \}$, i.e.,
\begin{equation*}
\span\Bo = \{ \sum_y c_y \Bo_y \mid c_y \in \complex, c_y\neq 0\  \textrm{for finitely many $y$} \} \, .
\end{equation*}
By $\overline{\span}\Bo$ we denote the closure of $\span\Bo$ in the $\sigma$-weak operator topology.
(Clearly, if $\dim\hi < \infty$, then $\overline{\span}\Bo=\span\Bo$.)

\begin{proposition}\label{prop:square}
Suppose that an observable $\Bo$ has the following property:
\begin{equation}\label{eq:square}
\forall y \in \Omega_\Bo:\quad \Bo_y^2 \in \overline{\span}\Bo \, .
\end{equation}
Then it is possible to measure an observable $\Ao$ without disturbing $\Bo$ if and only if $\Ao$ and $\Bo$ commute.
\end{proposition}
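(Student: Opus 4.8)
The ``if'' direction costs nothing: if $[\Ao_x,\Bo_y]=0$ for all $x,y$, the L\"uders instrument implementing $\Ao$ already realizes a measurement that leaves $\Bo$ undisturbed, exactly as computed in Section~\ref{sec:preli}. So the entire content lies in the converse, and the plan is to show that any instrument $\Ii$ which implements $\Ao$ and satisfies $\Ic^\ast(\Bo_y)=\Bo_y$ for all $y$ must force $[\Ao_x,\Bo_y]=0$.

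The first move is to promote the fixed-point property from the effects $\Bo_y$ to their squares. Recall that $\fix{\Ic^\ast}$ is a $\sigma$-weakly closed linear subspace containing every $\Bo_y$; hence it contains $\span\Bo$ and, by closedness, all of $\overline{\span}\Bo$. Hypothesis \eqref{eq:square} places each $\Bo_y^2$ in $\overline{\span}\Bo$, so I obtain the extra relation $\Ic^\ast(\Bo_y^2)=\Bo_y^2$. Conceptually this says that the Kadison--Schwarz inequality for the unital completely positive map $\Ic^\ast$ is saturated at the self-adjoint element $\Bo_y$, i.e.\ $\Ic^\ast(\Bo_y^\ast\Bo_y)=\Ic^\ast(\Bo_y)^\ast\Ic^\ast(\Bo_y)$, which is precisely the statement that $\Bo_y$ lies in the multiplicative domain of $\Ic^\ast$. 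The remaining task is to convert this equality into a commutation statement with the Kraus operators.

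Concretely, using the Kraus decompositions $\Ii_x^\ast(X)=\sum_\alpha K_{\alpha,x}^\ast X K_{\alpha,x}$ of the completely positive maps $\Ii_x$, so that $\Ic^\ast(X)=\sum_{x,\alpha}K_{\alpha,x}^\ast X K_{\alpha,x}$ and $\Ao_x=\Ii_x^\ast(\id)=\sum_\alpha K_{\alpha,x}^\ast K_{\alpha,x}$, I would mimic the positive-summands argument from the proof of Proposition~\ref{prop:nondist->commute} and expand
\begin{equation*}
\sum_{x,\alpha}(\Bo_y K_{\alpha,x}-K_{\alpha,x}\Bo_y)^\ast(\Bo_y K_{\alpha,x}-K_{\alpha,x}\Bo_y)=\Ic^\ast(\Bo_y^2)-\Ic^\ast(\Bo_y)\Bo_y-\Bo_y\Ic^\ast(\Bo_y)+\Bo_y^2 ,
\end{equation*}
where the last term uses unitality $\sum_{x,\alpha}K_{\alpha,x}^\ast K_{\alpha,x}=\id$. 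Feeding in $\Ic^\ast(\Bo_y)=\Bo_y$ and the newly obtained $\Ic^\ast(\Bo_y^2)=\Bo_y^2$ makes the right-hand side vanish; since each summand on the left is a positive operator, every one of them is zero, whence $[\Bo_y,K_{\alpha,x}]=0$ for all $\alpha,x$. Taking adjoints gives $[\Bo_y,K_{\alpha,x}^\ast]=0$ as well, so $\Bo_y$ commutes with $\Ao_x=\sum_\alpha K_{\alpha,x}^\ast K_{\alpha,x}$, i.e.\ $[\Ao_x,\Bo_y]=0$ for every $x$, as required.

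I expect the only genuinely delicate point to be the very first step: recognizing that hypothesis \eqref{eq:square} together with the $\sigma$-weak closedness of $\fix{\Ic^\ast}$ is exactly what upgrades the fixed-point relation from $\Bo_y$ to $\Bo_y^2$. Once the square is also known to be fixed, the Schwarz inequality is forced to be tight and the commutator computation is a routine variant of the one already carried out for projections in Proposition~\ref{prop:nondist->commute}. In the infinite-dimensional setting I would additionally track $\sigma$-weak convergence of the Kraus sums, but the resulting operator identities $[\Bo_y,K_{\alpha,x}]=0$ are unaffected.
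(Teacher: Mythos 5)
Your proposal is correct and follows essentially the same route as the paper: promote the fixed-point relation from $\Bo_y$ to $\Bo_y^2$ via the $\sigma$-weak closedness of the linear subspace $\fix{\Ic^\ast}$, then expand the sum of positive operators $\sum [K,\Bo_y]^\ast[K,\Bo_y]$ over the Kraus operators, use unitality together with $\Ic^\ast(\Bo_y)=\Bo_y$ and $\Ic^\ast(\Bo_y^2)=\Bo_y^2$ to make it vanish, and conclude $[K,\Bo_y]=0$ and hence $[\Ao_x,\Bo_y]=0$. Your only departures are cosmetic but welcome: indexing the Kraus operators per instrument element $\Ii_x$ makes the final step $[\Ao_x,\Bo_y]=0$ explicit (the paper's proof works with a Kraus decomposition of the whole channel $\Ic$ and leaves this grouping implicit), and the multiplicative-domain/Kadison--Schwarz framing is a nice conceptual gloss on the same computation.
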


\begin{proof}
Let $\Ii$ be an instrument which implements $\Ao$ and does not disturb $\Bo$.
The set $\fix{\Ic^\ast}$  of the fixed points of $\Ic^\ast$ is a $\sigma$-weakly closed linear subspace of $\lh$.
Hence, from \eqref{eq:square} follows that every $\Bo_y^2$ is a fixed point of $\Ic^\ast$.

Let $\Ic(\varrho) = \sum_{j} K_{j} \varrho K_{j}^\ast$ be a Kraus decomposition for $\Ic$.
Then for every $y \in \Omega_\Bo$, we get
\begin{equation*}
\sum_{j} [K_j,\Bo_y]^\ast\ [K_j,\Bo_y] = \Ic^\ast(\Bo_y^2)+ \Bo_y^2-\Bo_y\Ic^\ast(\Bo_y)- \Ic^\ast(\Bo_y)\Bo_y = 0
\end{equation*}
and therefore $[K_j,\Bo_y]=0$ for each $j$.
This implies that $\Ao$ and $\Bo$ commute.
\end{proof}

To give a class of examples where the condition \eqref{eq:square} holds, suppose that $\Bo$ is a classical coarse-graining of a sharp observable $\Po$ in the sense that there is a stochastic square matrix $M$ such that $\Bo_y=\sum_{y'} M_{yy'} \Po_{y'}$.
Suppose further that $M$ is invertible (but the inverse need not be a stochastic matrix).
Then $\Po_{y'}=\sum_{y} M^{-1}_{y'y} \Bo_y$, implying that  $\Bo_y^2 \in \span\Bo$ for every $y\in\Omega_\Bo$.
The condition of $M$ being invertible means that $\Bo$ and $\Po$ are \emph{informationally equivalent} \cite{AlDo76}; $\Bo$ gives different measurement outcome distributions for two states $\varrho_1$ and $\varrho_2$ if and only if their measurement outcome distributions are different in a $\Po$-measurement.

The remaining results in this subsection rest on the following.

\begin{proposition}\label{prop:full-rank}
Let $\Ao$ be an observable and $\Ii$ an instrument implementing $\Ao$.
Suppose that the channel $\Ic$ has a full rank fixed point $\varrho$.
If $\Ii$ does not disturb an observable $\Bo$, then $\Ao$ and $\Bo$ commute.
\end{proposition}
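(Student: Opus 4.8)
The plan is to run the same commutator computation that underlies Proposition~\ref{prop:square}, with the faithful fixed point of $\Ic$ playing the role that the algebraic assumption \eqref{eq:square} played there. Fix an outcome $y\in\Omega_\Bo$ and write $B:=\Bo_y$, a selfadjoint fixed point of $\Ic^\ast$. Take the Kraus operators of the instrument maps, $\Ii_x(\cdot)=\sum_\alpha K_{\alpha,x}\cdot K_{\alpha,x}^\ast$, so that $\{K_{\alpha,x}\}_{\alpha,x}$ is a Kraus decomposition of $\Ic=\sum_x\Ii_x$ and $\sum_{\alpha,x}K_{\alpha,x}^\ast K_{\alpha,x}=\Ic^\ast(\id)=\id$. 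Expanding and using $\Ic^\ast(B)=B$ together with unitality, one obtains exactly as in Proposition~\ref{prop:square} the identity
\begin{equation*}
\sum_{\alpha,x}[K_{\alpha,x},B]^\ast\,[K_{\alpha,x},B]=\Ic^\ast(B^2)-B^2 \, .
\end{equation*}
In particular the right-hand side is a positive operator, which is just the Kadison--Schwarz inequality $\Ic^\ast(B^2)\geq\Ic^\ast(B)^2=B^2$.

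The crucial step is to show that this positive operator vanishes, and this is where the full rank fixed point $\varrho$ enters. Since $\Ic(\varrho)=\varrho$, trace duality gives
\begin{equation*}
\tr{\varrho\,\Ic^\ast(B^2)}=\tr{\Ic(\varrho)\,B^2}=\tr{\varrho\,B^2} \, ,
\end{equation*}
so that $\tr{\varrho\,(\Ic^\ast(B^2)-B^2)}=0$. Because $\varrho$ is full rank, the functional $X\mapsto\tr{\varrho X}$ is faithful on positive operators; hence $\Ic^\ast(B^2)-B^2\geq0$ with vanishing expectation forces $\Ic^\ast(B^2)-B^2=0$. The displayed identity then yields $[K_{\alpha,x},B]=0$ for every $\alpha,x$, since a sum of positive operators vanishes only if each term does.

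It remains to transfer commutation from the Kraus operators to the effects. From $[K_{\alpha,x},B]=0$ and its adjoint $[K_{\alpha,x}^\ast,B]=0$ we get that $B$ commutes with $\Ao_x=\Ii_x^\ast(\id)=\sum_\alpha K_{\alpha,x}^\ast K_{\alpha,x}$. As $y\in\Omega_\Bo$ was arbitrary, $[\Ao_x,\Bo_y]=0$ for all $x,y$. The one genuinely nontrivial point---and the place where the full rank hypothesis is indispensable---is the passage from the scalar equality $\tr{\varrho\,(\Ic^\ast(B^2)-B^2)}=0$ to the operator equality $\Ic^\ast(B^2)=B^2$; a non-faithful fixed point would only detect the Kadison--Schwarz defect on its support, and indeed the two-outcome examples of Subsection~\ref{sec:two-outcome} show that without such a fixed point the conclusion can fail.
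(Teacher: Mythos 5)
Your proof is correct, and it takes a more self-contained route than the paper. The paper's own proof is two lines: it invokes Lemma 3.4 of \cite{BrJoKiWe00}, which states that $\fix{\Ic^\ast}$ is an \emph{algebra} whenever $\Ic$ has a full rank fixed point, so that $\Bo_y\in\fix{\Ic^\ast}$ forces $\Bo_y^2\in\fix{\Ic^\ast}$, and then the commutator computation of Proposition~\ref{prop:square} finishes the argument. What you have done is inline exactly the piece of that cited lemma which is needed here: from the identity $\sum_{\alpha,x}[K_{\alpha,x},\Bo_y]^\ast[K_{\alpha,x},\Bo_y]=\Ic^\ast(\Bo_y^2)-\Bo_y^2$ (which you expand correctly, using unitality of $\Ic^\ast$ and $\Ic^\ast(\Bo_y)=\Bo_y$), positivity of the left-hand side gives the Kadison--Schwarz defect $\Ic^\ast(\Bo_y^2)-\Bo_y^2\geq 0$, and the invariance $\Ic(\varrho)=\varrho$ together with faithfulness of $X\mapsto\tr{\varrho X}$ kills this defect, so each commutator $[K_{\alpha,x},\Bo_y]$ vanishes and commutation passes to $\Ao_x=\sum_\alpha K_{\alpha,x}^\ast K_{\alpha,x}$. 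All steps check out, including the passage from $\tr{\varrho T}=0$, $T\geq 0$, $\varrho$ full rank to $T=0$, and the transfer from Kraus operators to effects via the adjoint relation. The trade-off: the paper's proof is shorter and modular but leans on an external structural result about the full fixed-point algebra; yours proves only the single consequence actually needed ($\Bo_y^2$ is again a fixed point), is verifiable without consulting \cite{BrJoKiWe00}, and makes transparent precisely where the full rank hypothesis enters---a point your closing remark states correctly, since the non-commuting example in Subsection~\ref{sec:two-outcome} (the instrument built from $K_1,\ldots,K_5$ on $\complex^3$) indeed shows the conclusion fails without a faithful invariant state.
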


\begin{proof}
As shown in Lemma 3.4 in \cite{BrJoKiWe00}, the set of fixed points $\fix{\Ic^\ast}$ forms an algebra if $\Ic$ has a full rank fixed point.
The claim then follows from Prop. \ref{prop:square}.
\end{proof}

In the rest of this subsection we assume that $\hi$ is a finite dimensional Hilbert space.
We can then identify $\trh$ and $\lh$ with the set $\mdc$ of $d\times d$ complex matrices, where $d=\dim\hi$.

Let $\Ii$ be an instrument.
By fixing a basis in $\mdc$, we can consider the channel $\Ic$ as a matrix acting on the $d^2$-dimensional vector space $\mdc$.
In this way, the fixed points of $\Ic$ are the right eigenvectors with eigenvalue $1$, while the fixed points of the dual channel $\Ic^\ast$ are the left eigenvectors with eigenvalue $1$.
In particular, the subspaces $\fix{\Ic}$ and $\fix{\Ic^\ast}$ consisting of the fixed points of $\Ic$ and $\Ic^\ast$ have the same dimension, $\dim\fix{\Ic} = \dim\fix{\Ic^\ast}$.

To formulate the following statement, we denote by $\dim\Bo$ the dimension of the linear subspace $\span\Bo\subset\mdc$.
Roughly speaking, $\dim\Bo$ is the number of independent measurement outcomes obtained in a $\Bo$-measurement.

\begin{proposition}\label{prop:dim}
Let $\Bo$ be an observable such that
\begin{equation}\label{eq:dim}
\dim\Bo \geq (d-1)^2+1 \, .
\end{equation}
It is possible to measure an observable $\Ao$ without disturbing $\Bo$ if and only if $\Ao$ and $\Bo$ commute.
\end{proposition}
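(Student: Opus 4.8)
The plan is to deduce the nontrivial implication from Proposition~\ref{prop:full-rank} by showing that the dimension hypothesis~\eqref{eq:dim} forces the channel $\Ic$ to possess a full rank fixed point. The reverse implication is immediate: if $\Ao$ and $\Bo$ commute, then the L\"uders instrument of $\Ao$ does not disturb $\Bo$, as computed in Section~\ref{sec:preli}. So suppose an instrument $\Ii$ implements $\Ao$ without disturbing $\Bo$. By~\eqref{eq:A-not-disturb-B} every $\Bo_y$ lies in $\fix{\Ic^\ast}$, hence $\span\Bo\subseteq\fix{\Ic^\ast}$ and $\dim\fix{\Ic^\ast}\geq\dim\Bo\geq(d-1)^2+1$. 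The finite-dimensional identity $\dim\fix{\Ic}=\dim\fix{\Ic^\ast}$ noted above then gives the same lower bound for $\dim\fix{\Ic}$.

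The crux is the following estimate: \emph{if the fixed states of a channel on $\mdc$ have maximal rank $r$, then $\dim\fix{\Ic}\leq r^2$.} Granting it, $(d-1)^2+1\leq\dim\fix{\Ic}\leq r^2$ is incompatible with $r\leq d-1$, so $r=d$; thus $\Ic$ has a full rank fixed point and Proposition~\ref{prop:full-rank} yields $[\Ao_x,\Bo_y]=0$ for all $x,y$.

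To establish the estimate I would fix a fixed state $\sigma$ of maximal rank $r$ and let $P$ be the projection onto $\mathcal H_R:=\ran\sigma$; by convexity and maximality every fixed state is supported in $\mathcal H_R$. Writing $\sigma=\sum_jK_j\sigma K_j^\ast$ for a Kraus decomposition of $\Ic$ and using $P^\perp\sigma P^\perp=0$ with $P^\perp=\id-P$, taking the trace gives $\sum_j\no{P^\perp K_j\sigma^{1/2}}^2=0$, so $P^\perp K_jP=0$ for every $j$. Hence $\mathcal H_R$ is invariant and each $K_j$ is block upper triangular, $K_j=\left(\begin{smallmatrix}A_j&B_j\\0&D_j\end{smallmatrix}\right)$, in the splitting $\mathcal H_R\oplus\mathcal H_R^\perp$. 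In the induced block decomposition of $\mdc$ the superoperator $\Ic$ is block triangular: the lower-right corner evolves on its own under $\Phi_T(X_{22})=\sum_jD_jX_{22}D_j^\ast$, while the maps $\Phi_R(X_{11})=\sum_jA_jX_{11}A_j^\ast$, $\Psi(X_{12})=\sum_jA_jX_{12}D_j^\ast$ and $X_{21}\mapsto\sum_jD_jX_{21}A_j^\ast$ govern the other three blocks, and $\Phi_R$ is a channel on $\mathcal H_R$ carrying the faithful fixed state $P\sigma P$.

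It then remains to rule out the eigenvalue $1$ off the diagonal corner. The map $\Phi_T$ is trace non-increasing and admits no positive fixed point: a hypothetical $Y\geq0$ with $\Phi_T(Y)=Y$ would satisfy $\tr{Y(\id-\sum_jD_j^\ast D_j)}=0$, force $B_jY=0$ (as $\sum_jB_j^\ast B_j=\id-\sum_jD_j^\ast D_j$ on $\mathcal H_R^\perp$), and thereby make $\left(\begin{smallmatrix}0&0\\0&Y\end{smallmatrix}\right)$ a fixed point of $\Ic$, contradicting the maximality of $r$. By Perron--Frobenius for positive maps this forces $\rho(\Phi_T)<1$. The step I expect to be the main obstacle is the off-diagonal map $\Psi$: I must show $1\notin\mathrm{spec}(\Psi)$, for which I would prove $\rho(\Psi)\leq\sqrt{\rho(\Phi_R)\,\rho(\Phi_T)}<1$, exploiting that $\Phi_R$ is a channel (so $\rho(\Phi_R)=1$) while $\Phi_T$ is a strict contraction---its dual admits $W>0$ with $\sum_jD_j^\ast WD_j\leq qW$ for some $q<1$, extracted from the convergent Neumann series $\sum_n(\Phi_T^\ast)^n$. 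With $1$ absent from the spectra of $\Phi_T$, $\Psi$ and the remaining off-diagonal map, the fixed-point equations propagate up the triangular structure, giving first $X_{22}=0$, then $X_{12}=X_{21}=0$; every fixed point of $\Ic$ thus reduces to its upper-left block and $\fix{\Ic}\cong\fix{\Phi_R}\subseteq\mathcal M_r(\complex)$ has dimension at most $r^2$, as claimed.
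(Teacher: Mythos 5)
Your argument is correct and reaches the paper's conclusion by a genuinely different route. Both proofs reduce the nontrivial direction to Proposition~\ref{prop:full-rank} by showing that \eqref{eq:dim} forces $\Ic$ to have a full rank fixed point, and both start from $\span\Bo\subseteq\fix{\Ic^\ast}$ together with $\dim\fix{\Ic}=\dim\fix{\Ic^\ast}$. The paper then simply invokes Lindblad's structure theorem \cite{Lindblad99}, $\fix{\Ic}=U\left(0\oplus\bigoplus_k\mkc\otimes\varrho_k\right)U^\ast$, reads off $\dim\fix{\Ic}=\sum_kd_k^2$ subject to $\sum_kd_km_k\leq d$, and observes that the only configurations compatible with $\dim\fix{\Ic}\geq(d-1)^2+1$ are $\fix{\Ic}=\mdc$ or $d_1=d-1$, $d_2=m_1=m_2=1$, $d_0=0$; in both cases $\id\in\fix{\Ic}$. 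You instead prove from scratch exactly the estimate that is needed, namely $\dim\fix{\Ic}\leq r^2$ with $r$ the maximal rank of an invariant state, via the block-triangular form of the Kraus operators on $\ran\sigma\oplus(\ran\sigma)^\perp$ and spectral radius bounds; your Perron--Frobenius step for $\Phi_T$ is sound (a positive fixed point of $\Phi_T$ would yield, as you show, an invariant state of $\Ic$ escaping $\ran\sigma$, and $\rho(\Phi_T)\leq 1$ holds since $\Phi_T$ is trace non-increasing). What your route buys is self-containedness---you re-derive the one piece of the fixed-point structure theory that the proposition actually uses---at the cost of the spectral analysis; the paper's proof is three lines but rests on a cited nontrivial theorem.

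The only incomplete point is the one you flagged: $\rho(\Psi)\leq\sqrt{\rho(\Phi_R)\,\rho(\Phi_T)}$. This inequality is true and standard, and it can be proven directly, without the weight $W$: for words $J=(j_1,\dots,j_n)$ write $A_J=A_{j_1}\cdots A_{j_n}$ and $D_J=D_{j_1}\cdots D_{j_n}$, so that $\Psi^n(X)=\sum_JA_JXD_J^\ast$. For unit vectors $u,v$, two applications of Cauchy--Schwarz give
\begin{equation*}
\mo{\ip{u}{\Psi^n(X)v}}\ \leq\ \no{X}\Big(\sum_J\no{A_J^\ast u}^2\Big)^{1/2}\Big(\sum_J\no{D_J^\ast v}^2\Big)^{1/2}
=\no{X}\,\ip{u}{\Phi_R^n(\id)u}^{1/2}\,\ip{v}{\Phi_T^n(\id)v}^{1/2}\, ,
\end{equation*}
whence $\no{\Psi^n}\leq\no{\Phi_R^n(\id)}^{1/2}\no{\Phi_T^n(\id)}^{1/2}$, and Gelfand's formula yields the claim; here $\no{\Phi_R^n(\id)}\leq\tr{\Phi_R^n(\id)}=r$ for all $n$ because $\Phi_R$ is positive and trace preserving, while $\no{\Phi_T^n(\id)}^{1/n}\to\rho(\Phi_T)$. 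Your $W$-based plan also closes, with a little more work: conjugating $\Psi$ by $X\mapsto XW^{1/2}$ replaces the $D_j$ by $\tilde D_j=W^{1/2}D_jW^{-1/2}$ satisfying $\sum_j\tilde D_j^\ast\tilde D_j\leq q\id$, and a row-column factorization of the vectorized map then gives $\no{\Psi^n}\leq\sqrt{r}\,q^{n/2}$ up to similarity, which again forces $\rho(\Psi)\leq\sqrt{q}<1$. With this lemma supplied, your proof is complete.
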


\begin{proof}
Let $\Ii$ be an instrument which implements $\Ao$ and does not disturb $\Bo$.
We will show that the condition \eqref{eq:dim} implies that $\Ic$ has a full rank fixed point.
The claim then follows from Proposition \ref{prop:full-rank}.

As proved in \cite{Lindblad99} (see also \cite{Wolf10}), there is a unitary matrix $U$ and a set of states $\varrho_k$ such that
\begin{equation}
\fix{\Ic} = U \left( 0 \oplus \bigoplus_{k=1}^K \mkc \otimes \varrho_k \right) U^\ast
\end{equation}
for an appropriate decomposition of the Hilbert space $\complex^d = \complex^{d_0}\oplus\bigoplus_k\complex^{d_k}\otimes\complex^{m_k}$.
In particular, there are natural numbers $d_1,\ldots,d_K$ and $m_1,\ldots,m_K$ such that $\dim\fix{\Ic} = \sum_k d_k^2$ and $\sum_k d_k m_k \leq d$.
For convenience, we may assume that $d_k\geq d_{k+1}$ for all $k=1,\ldots,K-1$.

It follows from this decomposition of $\fix{\Ic}$ that $\dim\fix{\Ic}$ can take only some specific values.
Clearly, the largest value is $\dim\fix{\Ic} = d^2$ and then $\fix{\Ic}=\mdc$, thus $\Ic$ clearly has a full rank invariant state.
The second largest value is $\dim\fix{\Ic} = (d-1)^2 +1$ and this means that $d_1=d-1$ and $d_2=m_1=m_2=1$.
In this case $\id\in\fix{\Ic}$, hence $\Ic$ has a full rank fixed point.
Since $\dim \fix{\Ic} \geq \dim\Bo$, the claim follows.
\end{proof}

It is a direct consequence of Proposition \ref{prop:dim} that for qubit observables (i.e. $d=2$) non-disturbance and commutativity are equivalent conditions.
We find it useful to give also a simplified proof of this fact.

\begin{proposition}\label{prop:qubit}
Let $\dim\hi=2$.
For two observables $\Ao$ and $\Bo$, the following conditions are equivalent:
\begin{itemize}
	\item[(i)] It is possible to measure $\Ao$ without disturbing $\Bo$.
	\item[(ii)]   It is possible to measure $\Bo$ without disturbing $\Ao$.
	\item[(iii)] $\Ao$ and $\Bo$ commute.
\end{itemize}
\end{proposition}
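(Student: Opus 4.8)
The plan is to deduce all three equivalences from Proposition \ref{prop:square} by verifying that, in dimension two, its hypothesis \eqref{eq:square} is satisfied by \emph{every} observable. Since commutativity is a symmetric relation and Proposition \ref{prop:square} already yields a biconditional, it is enough to establish (i) $\Leftrightarrow$ (iii); the equivalence (ii) $\Leftrightarrow$ (iii) then follows verbatim after interchanging the roles of $\Ao$ and $\Bo$.

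The crux is a purely dimensional fact. Every self-adjoint $2\times 2$ matrix $E$ satisfies its own characteristic equation, so by Cayley--Hamilton
\begin{equation*}
E^2 = (\tr{E})\,E - (\det E)\,\id \, .
\end{equation*}
Applying this to each effect $\Bo_y$ shows that $\Bo_y^2$ is a linear combination of $\id$ and $\Bo_y$. Because $\id = \sum_y \Bo_y$ already lies in $\span\Bo$, we obtain $\Bo_y^2 \in \span\Bo = \overline{\span}\Bo$ for every $y$ (the closure being superfluous since $\dim\hi < \infty$). Thus condition \eqref{eq:square} holds for an arbitrary qubit observable $\Bo$, with no assumption on its number of outcomes or on the rank of its effects.

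It then remains only to invoke Proposition \ref{prop:square}: it is possible to measure $\Ao$ without disturbing $\Bo$ if and only if $\Ao$ and $\Bo$ commute, which is exactly (i) $\Leftrightarrow$ (iii). Exchanging $\Ao$ and $\Bo$ gives (ii) $\Leftrightarrow$ (iii), closing the cycle of implications.

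I do not expect a genuine obstacle here; the whole force of the argument is that the two-dimensionality of $\hi$ forces the quadratic relation above, which makes \eqref{eq:square} automatic. The only point deserving a moment's care is the degenerate situation in which all $\Bo_y$ are multiples of $\id$ (so that $\span\Bo$ is one-dimensional); but then \eqref{eq:square} holds trivially and every $\Ao$ commutes with $\Bo$, so (i), (ii) and (iii) are all satisfied and the equivalence is unaffected.
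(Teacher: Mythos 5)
Your proof is correct, but it takes a genuinely different route from the paper's. The paper argues via the fixed-point structure of the channel $\Ic$: if $\Ic$ has a full rank fixed point, then commutativity follows from Proposition \ref{prop:full-rank} (which rests on the fact, from Lemma 3.4 of \cite{BrJoKiWe00}, that $\fix{\Ic^\ast}$ is then an algebra); if not, then in dimension two necessarily $\dim\fix{\Ic}=\dim\fix{\Ic^\ast}=1$, and since $\Ic^\ast(\id)=\id$ every $\Bo_y$ is a scalar multiple of $\id$, so commutativity is trivial. You instead verify the hypothesis \eqref{eq:square} of Proposition \ref{prop:square} directly: by Cayley--Hamilton every self-adjoint $2\times 2$ operator $E$ satisfies $E^2=(\tr{E})\,E-(\det E)\,\id$, and $\id\in\span\Bo$ (for countably many outcomes this uses that the partial sums of $\sum_y\Bo_y=\id$ lie in $\span\Bo$, which is closed since $\dim\hi<\infty$), whence $\Bo_y^2\in\span\Bo$ for \emph{every} qubit observable. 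Your argument is more self-contained: it bypasses the case distinction and the external algebra lemma behind Proposition \ref{prop:full-rank}, needing only Proposition \ref{prop:square} and linear algebra, and it makes the symmetry (i)$\Leftrightarrow$(ii) transparent, since the hypothesis holds for every qubit observable. What the paper's route buys is the channel fixed-point perspective that scales beyond qubits---the same structure theorem underlies Proposition \ref{prop:dim}, of which the qubit case is noted to be a direct consequence---whereas the Cayley--Hamilton identity is specific to $d=2$.
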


\begin{proof}
Let $\Ii$ be an instrument which implements $\Ao$ without disturbing $\Bo$.
If the channel $\Ic$ has a full rank fixed point, then $\Ao$ and $\Bo$ commute by Proposition \ref{prop:full-rank}.

So let us then assume that $\Ic$ does not have a full rank fixed point.
This implies that $\dim \fix{\Ic} =1$, hence also $\dim \fix{\Ic^\ast} =1$.
But $\Ic^\ast(\id)=\id$, and therefore each $\Bo_y$ is a scalar multiple of the identity operator $\id$.
Hence, $\Bo$ commutes with $\Ao$.
\end{proof}

%%%%%%%%%%%%%%%%%%%%%%%%%%%%%%%%%%%%%%%%%%%%
\subsection{Non-disturbance is not symmetric}\label{sec:not-symmetric}
%%%%%%%%%%%%%%%%%%%%%%%%%%%%%%%%%%%%%%%%%%%%

In the following we demonstrate that the non-disturbance relation is not symmetric; there exist observables $\Ao$ and $\Bo$ such that every measurement of $\Bo$ disturbs $\Ao$ while a suitable $\Ao$-measurement does not disturb $\Bo$.
Again this example originates from Remark 2 in \cite{Arveson72}.
The construction requires that $\dim\hi\geq 3$, and we have seen in Section \ref{sec:qubit} that for $\dim\hi=2$ non-disturbance is equivalent to commutativity, hence a symmetric relation.

Let $\hi=\complex^3$.
We choose $\Bo$ and $K_1,\ldots,K_5$ as in the end of Subsec. \ref{sec:two-outcome}.
We take $\Ao$ to be the five outcome observable $\Ao_x=K_xK_x^\ast$, $x=1,\ldots,5$.
The effects are thus
\begin{eqnarray*}
\Ao_1 &=& \frac{1}{4}
\begin{pmatrix}
 2 & 0 &  -\sqrt{2}\\
0 & 0 & 0\\
-\sqrt{2} & 0 & 1
\end{pmatrix}
\, , \quad
\Ao_2 =\frac{1}{10}
\begin{pmatrix}
0 & 0 & 0\\
0 & 1 & -2\\
0 & -2 & 4
\end{pmatrix} 
\, ,
\\
\Ao_3 &=& \frac{1}{2}
\begin{pmatrix}
0 & 0 & 0\\
0 & 1 & 0\\
0 & 0 & 0
\end{pmatrix}
\, ,  \qquad\qquad\ 
\Ao_4 = \frac{1}{10}
\begin{pmatrix}
0 & 0 & 0\\
0 & 4 & 2\\
0 & 2 & 1
\end{pmatrix}
\, , 
\\
\Ao_5 &=& \frac{1}{4}
\begin{pmatrix}
 2 & 0 &  \sqrt{2}\\
0 & 0 & 0\\
 \sqrt{2} & 0 & 1
\end{pmatrix}
\, . 
\end{eqnarray*}
The instrument $\Ii_x^\ast (\cdot) = K_x \cdot K_x^\ast$ implements $\Ao$ and does not disturb $\Bo$.
However, the matrices $\Ao_1,\ldots,\Ao_5$ are linearly independent, implying that $\dim\Ao = 5$.
As $\Ao$ and $\Bo$ do not commute, it follows from Proposition \ref{prop:dim} that all $\Bo$-measurements disturb $\Ao$.

%%%%%%%%%%%%%%%%%%%%%%%%%%%%%%%%%%%%%%%%%%%%%%
\subsection{Rank-1 observables}\label{sec:rank1}
%%%%%%%%%%%%%%%%%%%%%%%%%%%%%%%%%%%%%%%%%%%%%%

An effect $E\in\eh$ is \emph{rank-1} if there is a one-dimensional projection $P$ and a number $0<e\leq 1$ such that $E=eP$.
A discrete observable $\Ao$ is called \emph{rank-1 observable} if each effect $\Ao_x$ is rank-1.
Rank-1 observables form an important subset of all observables.

From Proposition \ref{prop:nondist->commute} we conclude the following.

\begin{proposition}\label{prop:rank1com}
Let $\Bo$ be a rank-1 observable.
It is possible to measure an observable $\Ao$ without disturbing $\Bo$ if and only if $\Ao$ and $\Bo$ commute.
\end{proposition}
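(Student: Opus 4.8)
The plan is to obtain this as an immediate corollary of Proposition~\ref{prop:nondist->commute} for the nontrivial implication, together with the L\"uders-instrument computation already recorded in the preliminaries for the converse. Since the statement is an equivalence I would argue the two directions separately, and essentially all of the (small) content lies in verifying that the hypothesis of Proposition~\ref{prop:nondist->commute} holds at every outcome of $\Bo$.

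For the direction ``non-disturbance $\Rightarrow$ commutativity'' I would begin from the definition of a rank-1 observable: each effect is of the form $\Bo_y = e_y P_y$ with $P_y$ a one-dimensional projection and $0 < e_y \leq 1$. The key observation is simply that this makes every $\Bo_y$ proportional to a projection, which is exactly the standing assumption in Proposition~\ref{prop:nondist->commute}. Assuming an instrument $\Ii$ implements $\Ao$ without disturbing $\Bo$, the non-disturbance condition \eqref{eq:A-not-disturb-B} gives $\Ic^\ast(\Bo_y) = \Bo_y$ for every $y \in \Omega_\Bo$. I would then invoke Proposition~\ref{prop:nondist->commute} once for each $y$ to conclude $[\Ao_x, \Bo_y] = 0$ for all $x \in \Omega_\Ao$; letting $y$ range over $\Omega_\Bo$ yields commutativity of $\Ao$ and $\Bo$.

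The converse requires no rank-1 hypothesis and is the general fact noted in the preliminaries: if $\Ao$ and $\Bo$ commute, then the L\"uders instrument $\IiL$ implementing $\Ao$ satisfies $\tr{\IcL(\varrho)\Bo_y} = \tr{\varrho\Bo_y}$ for every state $\varrho$ and every $y$, so $\Ao$ can be measured without disturbing $\Bo$. Here I would just refer to that computation rather than reproduce it.

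I do not anticipate a genuine obstacle: once Proposition~\ref{prop:nondist->commute} is available the result is essentially automatic. The only point deserving care is that ``rank-1'' supplies the ``proportional to a projection'' hypothesis \emph{simultaneously for every outcome}, so that the single-outcome conclusion of Proposition~\ref{prop:nondist->commute} can be upgraded to commutativity of the full observable $\Bo$ with $\Ao$.
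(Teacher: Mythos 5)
Your proof is correct and follows exactly the paper's route: the paper derives Proposition~\ref{prop:rank1com} precisely as an immediate consequence of Proposition~\ref{prop:nondist->commute} (applied outcome by outcome, since every rank-1 effect is proportional to a projection), with the converse supplied by the L\"uders-instrument computation from the preliminaries. Nothing is missing.
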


Suppose then that $\Ao$ is a rank-1 observable and it can be measured without disturbing another observable $\Bo$.
This does \emph{not} imply that $\Ao$ and $\Bo$ commute.
Indeed, the example given in Subsec. \ref{sec:not-symmetric} serves as a counterexample.

In spite of this fact, a measurement of a rank-1 observable does make all subsequent measurements useless.
In the following we make this statement precise and we characterize all instruments implementing rank-1 observables.

\begin{proposition}\label{prop:rank1}
Let $E$ be an effect. The following conditions are equivalent:
\begin{itemize}
\item[(i)] $E$ is rank-1.
\item[(ii)] Every completely positive linear mapping $\Phi$ on $\trh$ which satisfies $\Phi^\ast(\id)=E$ is of the form
\begin{equation}\label{eq:operation0}
\Phi(\varrho) = \tr{\varrho E}\xi
\end{equation}
for some state $\xi$.
\end{itemize}
\end{proposition}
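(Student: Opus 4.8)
The plan is to prove the two implications separately, the substance being: (i)$\Rightarrow$(ii) by a direct analysis of the Kraus operators of $\Phi$, and (ii)$\Rightarrow$(i) by contraposition, exhibiting an explicit map whenever $E$ has rank at least two.

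For (i)$\Rightarrow$(ii) I would write $E=e\kb{\psi}{\psi}$ with $\no{\psi}=1$ and $0<e\leq1$, and fix a Kraus decomposition $\Phi(\cdot)=\sum_\alpha K_\alpha\cdot K_\alpha^\ast$, so that the hypothesis becomes $\Phi^\ast(\id)=\sum_\alpha K_\alpha^\ast K_\alpha = e\kb{\psi}{\psi}$. The key observation is the operator fact that $0\leq A\leq B$ forces $\ker B\subseteq\ker A$ and hence $\overline{\ran A}\subseteq\overline{\ran B}$; applied to $K_\alpha^\ast K_\alpha\leq e\kb{\psi}{\psi}$ this pins the range of each $K_\alpha^\ast K_\alpha$ inside $\span\{\psi\}$, so $K_\alpha^\ast K_\alpha=c_\alpha\kb{\psi}{\psi}$ with $c_\alpha\geq0$ and $\sum_\alpha c_\alpha=e$. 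A polar decomposition then gives $K_\alpha=\kb{\eta_\alpha}{\psi}$ for vectors $\eta_\alpha$ with $\no{\eta_\alpha}^2=c_\alpha$, and substituting back collapses the map:
\begin{equation*}
\Phi(\varrho)=\sum_\alpha\kb{\eta_\alpha}{\psi}\,\varrho\,\kb{\psi}{\eta_\alpha}=\ip{\psi}{\varrho\,\psi}\sum_\alpha\kb{\eta_\alpha}{\eta_\alpha}.
\end{equation*}
Since $\ip{\psi}{\varrho\,\psi}=\tr{\varrho\kb{\psi}{\psi}}=e^{-1}\tr{\varrho E}$ and $\tr{\sum_\alpha\kb{\eta_\alpha}{\eta_\alpha}}=\sum_\alpha c_\alpha=e$, the operator $\xi:=e^{-1}\sum_\alpha\kb{\eta_\alpha}{\eta_\alpha}$ is a state and $\Phi(\varrho)=\tr{\varrho E}\xi$, which is exactly (ii).

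For (ii)$\Rightarrow$(i) I would argue by contraposition. Assume $E$ is not rank-one; excluding the trivial case $E=0$, this means $\dim\overline{\ran E}\geq2$. The L\"uders-type map $\Phi(\varrho)=E^{1/2}\varrho E^{1/2}$ is completely positive and satisfies $\Phi^\ast(\id)=E^{1/2}\id E^{1/2}=E$, yet it cannot have the form in (ii): choosing $\phi_1,\phi_2$ with $E^{1/2}\phi_1,E^{1/2}\phi_2$ linearly independent (possible as $\overline{\ran E}$ has dimension at least two) gives $\Phi(\kb{\phi_i}{\phi_i})=\kb{E^{1/2}\phi_i}{E^{1/2}\phi_i}$, two linearly independent output operators. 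Thus the range of $\Phi$ is at least two-dimensional, whereas every map $\varrho\mapsto\tr{\varrho E}\xi$ has one-dimensional range; this $\Phi$ therefore witnesses the failure of (ii).

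The main obstacle I anticipate is the rigidity step in (i)$\Rightarrow$(ii): extracting from the single constraint $\sum_\alpha K_\alpha^\ast K_\alpha=e\kb{\psi}{\psi}$ the conclusion that \emph{each} Kraus operator individually has the rank-one form $\kb{\eta_\alpha}{\psi}$. This hinges on the range-containment lemma together with the fact that a positive operator dominated by a scalar multiple of a rank-one projection is itself a nonnegative multiple of that projection; once this is secured, the normalization $\tr{\xi}=1$ is automatic from $\sum_\alpha c_\alpha=e$. I would also flag the degenerate case $E=0$, where the only admissible $\Phi$ is $0=\tr{\varrho E}\xi$, so the equivalence is to be read for nonzero effects.
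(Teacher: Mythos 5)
Your proof is correct. The direction (i)$\Rightarrow$(ii) follows essentially the same route as the paper's: a Kraus decomposition, the rigidity step that each $K_\alpha^\ast K_\alpha$ is dominated by $e\kb{\psi}{\psi}$ and hence a nonnegative multiple of it, and a polar decomposition to collapse each Kraus term; the paper phrases this with partial isometries $S_j$ and the states $S_jPS_j^\ast$ rather than your vectors $\eta_\alpha$, but the content is identical. The direction (ii)$\Rightarrow$(i) is where you genuinely diverge. The paper splits the spectrum of a non-rank-1 effect into two nonzero parts $E=E_1+E_2$, fixes two distinct states $\xi_1,\xi_2$, and shows that the measure-and-prepare map $\Phi(\cdot)=\tr{\cdot\,E_1}\xi_1+\tr{\cdot\,E_2}\xi_2$ admits no single output state $\xi$, by evaluating it on unit vectors chosen in $\ker E_1\setminus\ker E_2$ and in $\ker E_2\setminus\ker E_1$. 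You instead take the L\"uders map $\Phi(\varrho)=E^{1/2}\varrho E^{1/2}$ and observe that its range contains the two linearly independent operators $\kb{E^{1/2}\phi_i}{E^{1/2}\phi_i}$ as soon as $\dim\overline{\ran E}\geq 2$, whereas any map of the form \eqref{eq:operation0} has one-dimensional range. Your counterexample is more canonical (no arbitrary spectral splitting or choice of preparation states) and the dimension-of-range argument is shorter; the paper's construction, on the other hand, shows that the failure occurs already within the class of measure-and-prepare maps, the class that reappears in Corollary \ref{prop:instrument0}. A further small point in your favor: you correctly flag that the statement literally fails at $E=0$, where (ii) holds trivially (the only admissible map is $\Phi=0$) while (i) fails---a degenerate case the paper's proof silently excludes when it demands two nonzero spectral parts.
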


\begin{proof}
(i)$\Rightarrow$(ii):
Let $\{K_j\}$ be the set of Kraus operators for $\Phi$, so that
\begin{equation}
\Phi(\varrho) = \sum_j K_j \varrho K^\ast_j \, , \quad \Phi^\ast(\id)=\sum_j K_j^\ast K_j=E \, .
\end{equation}
The last equation implies that for each $j$, $K_j^\ast K_j \leq E$.
Since $E$ is rank-1, there is a number $0< k_j \leq 1$ such that $K_j^\ast K_j = k_j E$. Clearly, $\sum_j k_j =1$.

Let $K_j = S_j \abs{K_j}$ be the polar decomposition of $K_j$.
Here $S_j$ is a partial isometry with $\ker S_j = \ker K_j$ and
\begin{equation*}
\abs{K_j} = \sqrt{K_j^\ast K_j} = \sqrt{k_j} \sqrt{E} =  \sqrt{ek_j} P \, .
\end{equation*}
For every state $\varrho$, we then get
\begin{equation*}
K_j \varrho K_j^\ast = e k_j  S_j P \varrho P S_j^\ast = e k_j \tr{\varrho P} S_jPS_j^\ast = \tr{\varrho E} k_j S_jPS_j^\ast
\end{equation*}
and hence
\begin{equation*}
\Phi(\varrho) = \tr{\varrho E} \sum_j k_j S_j P S_j^\ast \, .
\end{equation*}

The remaining thing is to show that the operator $S_j P S_j^\ast$ is a state for each $j$, implying that the convex sum $\sum_j k_j S_j P S_j^\ast =: \xi$ is a state also.
The operator $S_j P S_j^\ast$ is clearly positive. The operator $S_j^\ast S_j$ is the projection on the closure of $\ran \abs{K_j}$, thus $S_j^\ast S_j = P$.
Therefore,
\begin{equation*}
\tr{S_j P S_j^\ast} = \tr{S_j^\ast S_j P} = \tr{P}=1 \, .
\end{equation*}

(ii)$\Rightarrow$(i):
We assume that an effect $E$ is not rank-1 and show that there exists an operation satisfying $\Phi^\ast(\id)=E$ but not being of the form \eqref{eq:operation0}.
Let $E=\int_{\sigma(E)} \lambda d\Pi(\lambda)$ be the spectral decomposition of $E$.
We split the spectrum $\sigma(E)$ into two disjoint parts $\sigma_1$, $\sigma_2$ such that the operators $E_j=\int_{\sigma_j} \lambda d\Pi(\lambda)$, $j=1,2$ are nonzero.
Clearly, $E_1+E_2=E$ and both $E_1$ and $E_2$ have eigenvalue 0.
We fix two different states $\xi_1, \xi_2$ and define $\Phi_j(\cdot) = \tr{\cdot E_j}\xi_j$ for $j=1,2$.
Then the operation $\Phi=\Phi_1+\Phi_2$ satisfies $\Phi^\ast(\id)=E$.
Suppose that $\xi$ is a state satisfying \eqref{eq:operation0}.
Choose a unit vector $\psi_1$ such that $E_1\psi_1=0$ and $E_2\psi_1\neq 0$.
Then
\begin{equation*}
\Phi(\kb{\psi_1}{\psi_1}) = \ip{\psi_1}{E\psi_1}\xi = \ip{\psi_1}{E_2\psi_1}\xi
\end{equation*}
and, on the other hand,
\begin{equation*}
\Phi(\kb{\psi_1}{\psi_1}) = \Phi_1(\kb{\psi_1}{\psi_1}) + \Phi_2(\kb{\psi_1}{\psi_1}) = \ip{\psi_1}{E_2\psi_1}\xi_2 \, .
\end{equation*}
Hence, $\xi=\xi_2$.
But similarly we get $\xi = \xi_1$ if we repeat the calculation with a vector $\psi_2$ satisfying $E_2\psi_2=0$ and $E_1\psi_2\neq 0$.
This leads to the conclusion $\xi_1 = \xi_2$, which is a contradiction.
Therefore, $\xi$ does not exist.
\end{proof}

\begin{corollary}\label{prop:instrument0}
Let $\Ao$ be an observable. The following conditions are equivalent:
\begin{itemize}
\item[(i)] $\Ao$ is a rank-1 observable.
\item[(ii)] All instruments implementing $\Ao$ are of the form
\end{itemize}
\begin{equation}\label{eq:instrument0}
\Ii_x(\varrho) = \tr{\varrho \Ao_x}\xi_x \, ,
\end{equation}
where $\{\xi_x\}$ is a set of states.
\end{corollary}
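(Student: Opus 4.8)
The plan is to prove Corollary \ref{prop:instrument0} as a direct consequence of Proposition \ref{prop:rank1}, which has already done the real work. The corollary merely repackages the single-effect statement of Proposition \ref{prop:rank1} into a statement about an entire instrument, so the proof should be short and I would expect it to reduce to applying the proposition outcome-by-outcome.

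For the implication (i)$\Rightarrow$(ii), I would assume $\Ao$ is rank-1 and let $\Ii$ be an arbitrary instrument implementing $\Ao$. By definition of an instrument, each $\Ii_x$ is a completely positive map on $\trh$ satisfying $\Ii_x^\ast(\id)=\Ao_x$. Since $\Ao$ is a rank-1 observable, each effect $\Ao_x$ is rank-1, so I can apply Proposition \ref{prop:rank1} with $\Phi=\Ii_x$ and $E=\Ao_x$. Part (ii) of that proposition then gives a state $\xi_x$ with $\Ii_x(\varrho)=\tr{\varrho\Ao_x}\xi_x$, which is exactly the form \eqref{eq:instrument0}. Collecting the $\xi_x$ over all outcomes $x$ yields the required set of states, so this direction is essentially immediate.

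For the converse (ii)$\Rightarrow$(i), I would argue by contraposition or directly: suppose every instrument implementing $\Ao$ has the form \eqref{eq:instrument0}, and suppose for contradiction that some $\Ao_x$ is not rank-1. Here I would invoke the (ii)$\Rightarrow$(i) direction of Proposition \ref{prop:rank1} applied to that effect $E=\Ao_x$: it produces a completely positive map $\Phi$ with $\Phi^\ast(\id)=\Ao_x$ that is \emph{not} of the product form. The only subtlety is that Proposition \ref{prop:rank1} speaks about a single map, whereas the corollary speaks about a whole instrument; so I must build a genuine instrument implementing $\Ao$ that incorporates this bad $\Phi$ at outcome $x$. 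This is done by taking $\Ii_x=\Phi$ for the offending outcome and choosing for the remaining outcomes any valid completely positive maps with $\Ii_{x'}^\ast(\id)=\Ao_{x'}$ (for instance the maps $\Ii_{x'}(\varrho)=\tr{\varrho\Ao_{x'}}\zeta_{x'}$ of the form already exhibited in the preliminaries). The resulting $\Ii$ is an instrument implementing $\Ao$ whose $x$-component is not of the product form, contradicting the assumption.

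The only step requiring a little care is the gluing in the converse: one must check that replacing a single component of a product-form instrument by the non-product map $\Phi$ still yields a legitimate instrument, i.e.\ that the adjoint-normalization $\Ii_{x'}^\ast(\id)=\Ao_{x'}$ holds for every outcome while no global trace-preservation constraint is violated (none is required, since the definition of instrument in the excerpt imposes only the per-outcome conditions $\Ii_x^\ast(\id)=\Ao_x$). I would therefore not expect any genuine obstacle; the main content is entirely carried by Proposition \ref{prop:rank1}, and the corollary follows by a routine outcome-by-outcome application together with this elementary reassembly argument.
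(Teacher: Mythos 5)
Your proof is correct and matches the paper's intent: the paper states this as an immediate corollary of Proposition~\ref{prop:rank1}, and your outcome-by-outcome application (plus the gluing construction for the converse, which is valid since the paper's definition of an instrument imposes only the per-outcome conditions $\Ii_x^\ast(\id)=\Ao_x$) is exactly the argument the paper leaves implicit.
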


Suppose we measure first a rank-1 observable $\Ao$ and then some other observable $\Bo$.
The instrument $\Ii$ describing the $\Ao$-measurement is of the form \eqref{eq:instrument0} and the joint probability distribution is thus given by
\begin{equation*}
\tr{\Ii_x(\varrho)\Bo_y} = \tr{\varrho\Ao_x} \tr{\xi_x \Bo_y} \, .
\end{equation*}
Here we see that the joint probability distribution can be calculated already after the first measurement since the numbers $\tr{\xi_x \Bo_y}$ do not depend on the initial state $\varrho$ at all.
Therefore, the $\Bo$-measurement is completely redundant.

%%%%%%%%%%%%%%%%%%%%%%%%%%%%
\section{Measurements of the first kind}\label{sec:fkm}
%%%%%%%%%%%%%%%%%%%%%%%%%%%%

A \emph{first kind measurement} is one which does not disturb itself.
More precisely, an instrument $\Ii$, implementing an observable $\Ao$, is a \emph{first kind instrument} if
\begin{equation}\label{eq:fk}
\Ic^\ast(\Ao_x)=\Ao_x\quad \textrm{for all}\ x\in\Omega_\Ao \, .
\end{equation}
We say that an observable $\Ao$ \emph{admits a first kind measurement} if there exists a first kind instrument which implements $\Ao$. In this section we study some conditions guaranteeing that an observable $\Ao$ admits or does not admit a first kind measurement. The overall picture is summarized in Fig. \ref{fig:2}.

A L\"uders instrument $\IiL$ implementing a commutative observable $\Ao$ satisfies the first kind condition \eqref{eq:fk}.
Therefore, all commutative observables admit first kind measurements.
As we have seen in Section \ref{sec:pairs}, commutativity is also a necessary condition if $\Ao$ is either a qubit observable (Prop. \ref{prop:qubit}) or a rank-1 observable (Prop. \ref{prop:rank1com}). 

An example of an observable not admitting a first kind measurement is an informationally complete observable.
Namely, any non-trivial measurement necessarily perturbs at least some state.
But an informationally complete observable gives different measurement outcome distributions for all states, hence a subsequent  measurement of the same observable detects any perturbation caused by the first measurement.
In a finite dimension $d\geq 3$ this observation is generalized by the condition that if $\dim\Ao\geq (d-1)^2+1$, then $\Ao$ does not admit a first kind measurement. Namely, $\Ao$ cannot be commutative since $(d-1)^2+1 > d$, and the conclusion therefore follows from Prop. \ref{prop:dim}.

\begin{figure}
\begin{center}
\includegraphics[width=12cm]{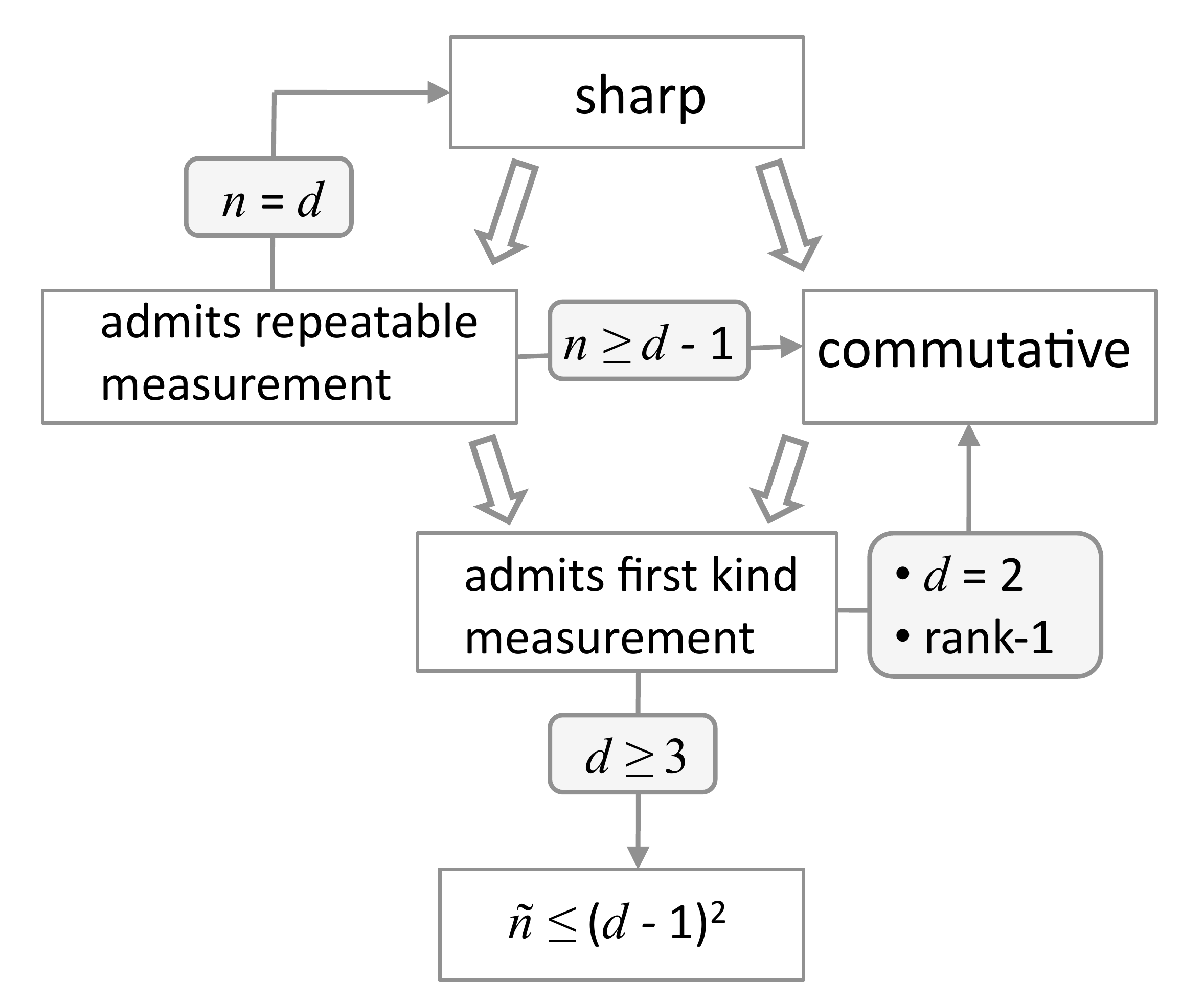}
\end{center}
\caption{Implications between the different concepts. Here $d=\dim\hi$, $n$ is the number of outcomes and $\widetilde{n}=\dim\Ao$. The gray boxes indicate the additional conditions under which the implications holds. Each of the two conditions in the bullet list is sufficient for the implication. \label{fig:2}}
\end{figure}

A more stringent condition than the first kind condition \eqref{eq:fk} is \emph{repeatability}; an instrument $\Ii$, implementing an observable $\Ao$, is repeatable if
\begin{equation}\label{eq:rep}
\Ii_x^\ast(\Ao_y)=0 \quad \textrm{whenever}\ x\neq y \, .
\end{equation}
This condition means that measuring repeatedly gives not only the same statistics but repeated measurement outcomes.

It is clear that a repeatable instrument is of the first kind.
The converse is, however, not true.
For instance, it is easy to see that a L\"uders instrument of a commutative observable is repeatable if and only if the associated observable is sharp.
But as we pointed out earlier, the L\"uders instrument of any commutative observable is first kind.

An observable $\Ao$ admits a repeatable measurement if and only if each effect $\Ao_x$ has an eigenvalue $1$ \cite{QTM96}.
Namely, it follows from \eqref{eq:rep} that whenever $\Ii_x(\varrho)\neq 0$, then $\Ii_x(\varrho)$ is an unnormalized eigenstate of $\Ao_x$ with eigenvalue $1$.
On the other hand, if $\Ao$ satisfies this eigenvalue condition, we can construct a repeatable instrument by first fixing for each outcome $x$ a unit vector $\psi_x$ satisfying $\Ao_x\psi_x=\psi_x$, and then defining
\begin{equation*}
\Ii_x(\varrho) = \tr{\varrho \Ao_x} \kb{\psi_x}{\psi_x} \, .
\end{equation*}
Since $\sum_x \Ao_x = \id$, we have $\Ao_y\psi_x=0$ whenever $y\neq x$, implying that $\Ii$ is repeatable.

We conclude that there are two sufficient conditions for an observable $\Ao$ to admit a first kind measurement:
\begin{itemize}
	\item $\Ao$ is commutative.
	\item Each effect $\Ao_x$ has eigenvalue $1$.
\end{itemize}
These two conditions overlap (e.g. sharp observables), but they also cover different situations.
To give an example of a non-commutative observable $\Ao$ having the property that each effect $\Ao_x$ has eigenvalue $1$, suppose that $\Omega_\Ao =\{1,2,3\}$ and $d=\dim\hi\geq 5$. We split $\hi=\hi_3\oplus \hi_{d-3}$ into a $3$ -dimensional and $d-3$ -dimensional subspaces $\hi_3$ and $\hi_{d-3}$.
In $\hi_3$ we fix three orthogonal one-dimensional projections $P_1,P_2,P_3$, while in $\hi_{d-3}$ we fix two non-commuting projections $R_1,R_2$.
Then $\Ao$, defined as
\begin{eqnarray*}
\Ao_1 = P_1 \oplus \half R_1 \, \quad \Ao_2 = P_2 \oplus \half R_2 \, , \\
\Ao_3 = P_3 \oplus ( \id -\half R_1 - \half R_2 ) \, ,
\end{eqnarray*}
has the required properties.

The previous example is actually minimal in the sense that a non-commutative observable must have at least 3 outcomes, and, as we show in Proposition \ref{prop:number-of-outcomes} below, a non-commutative observable can satisfy the eigenvalue condition only if $\mo{\Omega_\Ao} \leq d-2$.
In particular, for $d\leq 4$ only commutative observables can have repeatable measurements.

\begin{proposition}\label{prop:number-of-outcomes}
Suppose that $d=\dim\hi < \infty$ and let $\Ao$ be an obsevable admitting a repeatable measurement.
Then
\begin{itemize}
	\item[(a)] $\mo{\Omega_{\Ao}}\leq d$.
	\item[(b)]  If $\mo{\Omega_{\Ao}} = d$, then $\Ao$ is sharp.
	\item[(c)] If $\mo{\Omega_{\Ao}}\geq d-1$, then $\Ao$ is commutative.
\end{itemize}
\end{proposition}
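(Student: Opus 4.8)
The plan is to start from the eigenvalue characterization established just above the statement: since $\Ao$ admits a repeatable measurement, each effect $\Ao_x$ has an eigenvalue $1$, so I may fix a unit vector $\psi_x$ with $\Ao_x\psi_x = \psi_x$. The first thing I would record is that each such eigenvector is annihilated by all the other effects. Indeed, from $\sum_z \Ao_z = \id$ together with $0 \leq \Ao_z \leq \id$ we get $1 = \ip{\psi_x}{\psi_x} = \sum_z \ip{\psi_x}{\Ao_z \psi_x}$ where the term $\ip{\psi_x}{\Ao_x\psi_x}=1$ already saturates the sum, forcing $\ip{\psi_x}{\Ao_z\psi_x}=0$ and hence $\Ao_z\psi_x = 0$ for every $z\neq x$.

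For part (a) I would then prove that the vectors $\{\psi_x\}$ are linearly independent: applying $\Ao_{x_0}$ to a relation $\sum_x c_x \psi_x = 0$ and using $\Ao_{x_0}\psi_x = \delta_{x_0 x}\psi_{x_0}$ collapses the sum to $c_{x_0}\psi_{x_0}=0$, so $c_{x_0}=0$ for every $x_0$. This immediately gives $\mo{\Omega_\Ao}\leq d$.

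The core of the argument, shared by (b) and (c), is to analyze the subspace $V=\span\{\psi_x\}$ of dimension $n:=\mo{\Omega_\Ao}$. Since $\Ao_x\psi_y = \delta_{xy}\psi_x \in V$, each $\Ao_x$ leaves $V$ invariant, and because $\Ao_x$ is self-adjoint it also leaves $W:=V^\perp$ invariant, yielding a block decomposition $\Ao_x = (\Ao_x|_V)\oplus(\Ao_x|_W)$. On $V$ the restriction $Q_x:=\Ao_x|_V$ satisfies $Q_x\psi_y=\delta_{xy}\psi_x$, so it is idempotent with one-dimensional range $\span\{\psi_x\}$; being self-adjoint, it must be the orthogonal projection onto $\span\{\psi_x\}$. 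Comparing $Q_x\psi_y=0$ for $y\neq x$ with this description forces $\ip{\psi_x}{\psi_y}=0$, so after normalization $\{\psi_x\}$ is in fact orthonormal and $\Ao_x|_V = \kb{\psi_x}{\psi_x}$. This deduction of orthonormality from self-adjointness of the restrictions $Q_x$ is the one step that requires care and is, I expect, the real crux; everything else is bookkeeping.

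From here the two remaining parts are immediate. If $n=d$ then $W=\{0\}$ and $\Ao_x=\kb{\psi_x}{\psi_x}$ is a projection, so $\Ao$ is sharp, proving (b). For (c) I combine $n\geq d-1$ with $n\leq d$ from part (a), leaving two cases. If $n=d$ then $\Ao$ is sharp by (b), and a sharp observable is commutative. If $n=d-1$ then $W$ is one-dimensional, spanned by a unit vector $\phi$, so $\Ao_x|_W$ is multiplication by a scalar $a_x\geq 0$ and $\Ao_x = \kb{\psi_x}{\psi_x} + a_x\kb{\phi}{\phi}$ is diagonal in the single orthonormal basis $\{\psi_1,\ldots,\psi_{d-1},\phi\}$. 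Simultaneously diagonal operators commute, so $\Ao$ is commutative, completing (c).
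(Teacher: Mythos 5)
Your proof is correct, and it arrives at the same structural fact as the paper --- each effect splits as a rank-one projection plus a remainder supported on a common orthocomplement --- but by a genuinely different technical route. The paper writes $\Ao_x = P_x + R_x$, with $P_x$ a rank-one subprojection of the eigenvalue-$1$ spectral projection of $\Ao_x$, and works purely with operator inequalities: normalization forces $P_xP_y=0$ for $x\neq y$ (giving (a)); then $P_x+R_y\leq\id$ gives $P_xR_yP_x= 0$, and the inequality $R_y^2\leq R_y$ upgrades this to $P_xR_y=0$, which is exactly your block structure, from which (b) and (c) are read off much as in your argument. You instead work at the vector level: the saturation argument $\sum_z\ip{\psi_x}{\Ao_z\psi_x}=1$ yields $\Ao_z\psi_x=0$ for $z\neq x$, and invariance of $V=\span\{\psi_x\}$ under the self-adjoint operators $\Ao_x$ produces the decomposition $\Ao_x=(\Ao_x|_V)\oplus(\Ao_x|_W)$. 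Your route is more geometric and avoids the paper's $R_y^2\leq R_y$ trick; the paper's is more compact at the operator level. One remark: the step you single out as the crux --- deducing orthonormality of $\{\psi_x\}$ via the self-adjoint idempotents $Q_x$ --- is more roundabout than necessary, since orthogonality follows in one line from facts you already have: for $y\neq x$,
\begin{equation*}
\ip{\psi_y}{\psi_x}=\ip{\psi_y}{\Ao_x\psi_x}=\ip{\Ao_x\psi_y}{\psi_x}=0 \, ,
\end{equation*}
because $\Ao_x\psi_y=0$ and $\Ao_x$ is self-adjoint. With that shortcut your invariant-subspace argument becomes a particularly clean version of the proof.
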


\begin{proof}
For each $x$, we write $\Ao_x=P_x+R_x$, where $P_x$ is a one-dimensional projection and part of the
spectral projection of $\Ao_x$ associated with eigenvalue 1.
Due to normalization two projections $P_x,P_y$ must be orthogonal whenever $x\neq y$.
Since there can be only $d$ orthogonal projections in $\hi$, we conclude that (a) holds.

For every $x,y$, we have $P_x + R_y \leq \id$ (since $\sum_x \Ao_x = \id$).
Multiplying by $P_x$ on both sides gives $P_xR_yP_x \leq 0$, and thus $P_xR_yP_x = 0$.
As $R_y^2 \leq R_y$ (since $R_y\in\eh$), we get
\begin{equation*}
 0 \leq P_xR_y(P_xR_y)^\ast = P_xR_y^2P_x \leq P_xR_yP_x = 0 \, .	
\end{equation*}
Thus, $P_xR_y=0$ and $(\sum_x P_x)R_y =0$.
The sum $\sum_x P_x$ is a $\mo{\Omega_{\Ao}}$-dimensional projection.
If $\mo{\Omega_{\Ao}}=d$, then $R_y=0$. This proves (b).

If $\mo{\Omega_{\Ao}}=d-1$, then the previous calculation shows that all the effects $R_y$ are scalar multiples of the 1-dimensional projection $\id-\sum_x P_x$.
This proves (c).
\end{proof}

%%%%%%%%%%%%%%%%%%%%%%%%%%%%%%%%%%%%%%%
\section{Quantifying and deciding non-disturbance}\label{sec:sdp}
%%%%%%%%%%%%%%%%%%%%%%%%%%%%%%%%%%%%%%%%

In this section we assume that $\dim\hi < \infty$ and that observables have only a finite number of outcomes.
We show how to decide whether an observable $\Ao$ can be measured without disturbing $\Bo$ and, if non-disturbance cannot be achieved, how to quantify the least amount of disturbance induced on $\Bo$ by measuring $\Ao$.

We  first propose a quantification of the non-disturbance relation in a form of a measure of disturbance $D_\Ao(\Bo)$ (Subsec. \ref{sec:sdp-1}).
This number satisfies $D_\Ao(\Bo)=0$ if and only if $\Ao$ can be measured without disturbing $\Bo$ and it has a simple physical interpretation.

We then demonstrate that calculating the number $D_\Ao(\Bo)$ is a semidefinite program (Subsec. \ref{sec:sdp-2}).
That is, the question whether an observable $\Ao$ can be measured without disturbing another observable $\Bo$ can be answered in an efficient and certifiable way. 
Note that this also implies that the question whether or not an observable admits a first kind measurement can be efficiently decided by setting $\Ao=\Bo$.

The problem of deciding whether or not observables are jointly measurable has been phrased in terms of a semidefinite program in \cite{WoPeFe09}.

%%%%%%%%%%%%%%%%%%%%%%%%%%%%%%%%%%%%%%%
\subsection{Quantification of the non-disturbance relation}\label{sec:sdp-1}
%%%%%%%%%%%%%%%%%%%%%%%%%%%%%%%%%%%%%%%%

Let $\Ii$ be an instrument which implements an observable $\Ao$.
A second observable $\Bo$, measured after $\Ao$, is then possibly perturbed. 
Its measurement outcome probabilities, relative to the state prior to the $\Ao$-measurement, are given by the modified effects $\Ic^\ast(\Bo_y)$.
We want to quantify the difference between $\Bo$ and its perturbed version.

Let us first notice that there is always a number $\lambda\in[0,1]$ such that
 \begin{equation}\label{eq:difference}
 -\lambda \id \leq \Bo_y - \Ic^\ast(\Bo_y) \leq \lambda \id \qquad \forall y \, .
  \end{equation}
If $\Ao$ can be measured without disturbing $\Bo$, then we can choose $\Ii$ in a way that $\lambda=0$ in \eqref{eq:difference}.
Generally, the condition \eqref{eq:difference} is equivalent with the requirement that for every state $\varrho$,
 \begin{equation*}
\mo{\tr{\varrho\Bo_y} - \tr{\varrho\Ic^\ast(\Bo_y)}} \leq \lambda \qquad \forall y \, .
  \end{equation*}
This inequality is expressing that the measurement outcome probabilities differ at most by $\lambda$.

We conclude that the number
\begin{equation*}
\max_{y} \sup_{\varrho} \mo{\tr{\varrho\Bo_y} - \tr{\varrho\Ic^\ast(\Bo_y)}} = \max_y\no{\Bo_y - \Ic^\ast(\Bo_y)}
  \end{equation*}
 gives a natural quantification of the difference between $\Bo$ and its perturbed version $\Ic^\ast(\Bo)$.
In \cite{BuHe08} this kind of distance between two observables was used in the study of approximate joint measurability.

We now want to quantify the least amount of disturbance induced on $\Bo$ by an $\Ao$-measurement.
Hence, we denote by $D_\Ao(\Bo)$ the smallest number $\lambda$ attained by any implementation of $\Ao$, i.e.,
 \begin{equation}\label{eq:sdprimal1}
 D_\Ao(\Bo):=\inf\big\{\lambda\in[0,1] \ |\ \forall y:\;  -\lambda \id \leq \Bo_y - \Ic^\ast(\Bo_y) \leq \lambda \id\big\} \, ,
 \end{equation}
 where the infimum is taken over all instruments implementing $\Ao$.

 We will see in Subsec. \ref{sec:sdp-2} that the infimum in \eqref{eq:sdprimal1} is always attained.
 Therefore, $D_\Ao(\Bo)=0$ if and only if $\Ao$ can be implemented without disturbing $\Bo$. In general $D_\Ao(\Bo)$ is, by construction, the maximal possible disturbance of measured probabilities minimized over all instruments implementing $\Ao$.

 \begin{example}
 Let $\Ao$ and $\Bo$ be two sharp qubit observables.
We assume that they do not commute, hence an $\Ao$-measurement necessarily disturbs $\Bo$.
By Corollary \ref{prop:instrument0}, an instrument $\Ii$ implementing $\Ao$ is determined by two states $\xi_1$ and $\xi_2$, and we thus have
\begin{equation*}
\Ic^\ast(\Bo_y) = \tr{\xi_2\Bo_y}\id + \tr{(\xi_1-\xi_2)\Bo_y}\Ao_1 \, , \quad y=1,2 \, .
\end{equation*}
The two inequalities \eqref{eq:difference} for $y=1$ and $y=2$ are equivalent.
To find $D_\Ao(\Bo)$ we need to choose $\xi_1$ and $\xi_2$ in a way that the norm of the operator $\Bo_1 - \tr{\xi_2\Bo_1}\id - \tr{(\xi_1-\xi_2)\Bo_1}\Ao_1$ is as small as possible.
To calculate the norm, we write $\Ao_1 = \half ( \id + \vec{a}\cdot\vec{\sigma} )$ and $\Bo_1 = \half ( \id + \vec{b}\cdot\vec{\sigma} )$, where $\vec{a}, \vec{b}$ are unit vectors and $\vec{\sigma}=(\sigma_1,\sigma_2,\sigma_3)$ are the Pauli matrices. 
A straightforward calculation then shows that the smallest norm is $\half \sin\theta_{ab}$.
We thus conclude that $D_\Ao(\Bo)=\half \sin\theta_{ab}=\no{\Ao_1\Bo_1-\Bo_1\Ao_1}$.
The disturbance is therefore directly connected with the degree noncommutativity of $\Ao$ and $\Bo$.
\end{example}

 %%%%%%%%%%%%%%%%%%%%%%%%%%%%%%%%%%%%%%%%%%%%
\subsection{Non-disturbance as a semidefinite program}\label{sec:sdp-2}
%%%%%%%%%%%%%%%%%%%%%%%%%%%%%%%%%%%%%%%%%%%%

In the following we show that deciding whether an observable can be measured without disturbing another observable is a semidefinite program.

To cast the quantification problem of Subsec. \ref{sec:sdp-1} into a semidefinite program, we notice that the task is to find the smallest number $\lambda$ under the conditions that there exists a collection $\{\Ii^\ast_x\}$ of linear maps satisfying
\begin{itemize}
	\item[(a)] $\Ii^\ast_x(\id) = \Ao_x$ for all $x$,
	\item[(b)] $\Ii_x^\ast$ is completely positive for all $x$,
	\item[(c)] $-\lambda \id \leq \Bo_y - \Ic^\ast(\Bo_y) \leq \lambda \id$ for all $y$.
\end{itemize}
The condition for completely positivity can be written as
\begin{equation*}
(\Ii_x\otimes{\rm id})(\omega)\geq 0 \, ,
\end{equation*}
where $\omega$ is the maximally entangled state $\omega=\frac{1}{d} \sum_{j,k} \kb{jj}{kk}$.

To proceed, we fix a selfadjoint operator basis $\{E_i\}$ satisfying $\tr{E_iE_j}=\delta_{ij}$ and $E_0=\frac{1}{\sqrt{d}}\id$.
Conditions (a)-(c) can now be written as follows:
\begin{itemize}
	\item[(a')] $\pm\frac{1}{\sqrt{d}} \Ao_x \leq \pm\sum_j \tr{E_j\Ii^\ast_x(E_0)} E_j  \quad \forall x$
	\item[(b')] $0\leq  \sum_{ij} \tr{E_j \Ii^\ast_x(E_i)} E_j \otimes E_i^T \quad \forall x$
	\item[(c')] $\pm\Bo_y \leq  \lambda \sqrt{d}E_0 \pm \sum_{ijx} \tr{E_j\Ii^\ast_x(E_i)} \tr{E_i\Bo_y} E_j  \quad \forall y $
\end{itemize}
To get (b') we have used the fact that the operator $\sum_i E_i\otimes E_i^T$ is
proportional to the maximally entangled state $\omega$.

We hence see that the task of minimizing $\lambda$ under conditions (a')-(c') can, after combining the constraints using a direct sum, be brought to the form
\begin{equation}\label{eq:sdp-primal}
\inf_{c\in\real^n} \left\{ \ip{v}{c} |\  F_0 \leq \sum_{\ell=1}^n c_\ell F_\ell \right\} \, ,
\end{equation}
where $v\in\real^n$ is a vector and $F_0,F_{\ell}$ are Hermitian matrices.
It is therefore a semidefinite program, and the dual problem is
\begin{equation}\label{eq:sdp-dual}
\sup_{C\geq 0} \left\{ \tr{F_0C} |\ \tr{ F_\ell C}=v_\ell \ \forall \ell \right\} \, .
\end{equation}

In the case under investigation, the dual takes the form
 \begin{equation*}
 D_\Ao^\ast(\Bo):= \sup \left( \sum_x \tr{H_x \Ao_x} - \sum_y \tr{K_y \Bo_y} \right)
 \end{equation*}
 where the supremum is taken over all selfadjoint operators $H_x$, $K_y$ satisfying
 \begin{itemize}
\item[(d)] $H_x\otimes\1\leq \sum_y K_y\otimes\Bo_y^T$ for all $x$
\item[(e)] $\sum_y \tr{\mo{K_y}}=1$.
\end{itemize}
By the general theory of semidefinite programs \cite{CO04}, $D_\Ao^\ast(\Bo)$ is automatically a lower bound on $D_\Ao(\Bo)$.
Actually, when written in the standard form \eqref{eq:sdp-dual} the dual program is seen to be strictly feasible.
It follows that $D_\Ao^\ast(\Bo)=D_\Ao(\Bo)$ and the extremum is attained for $D_\Ao(\Bo)$.
This also means that the proposed measure $D_{\Ao}(\Bo)$ can be efficiently computed numerically for any given pair $\Ao$ and $\Bo$, and that the obtained result can be certified by the solution of the dual.

%%%%%%%%%%%%%%%%%%%%%%%%%%%%%%%%%%%%%%%
\section*{Acknowledgements}
%%%%%%%%%%%%%%%%%%%%%%%%%%%%%%%%%%%%%%%%

We acknowledge financial support by QUANTOP, the Danish research council (FNU) and the EU projects QUEVADIS and COQUIT.

\end{document}